\documentclass[english,11pt]{article}

\usepackage[english]{babel}
\usepackage[utf8x]{inputenc}
\usepackage[T1]{fontenc}
\usepackage[formats]{listings}
\usepackage{geometry}
\usepackage{enumitem}

\geometry{verbose,tmargin=1in,bmargin=1in,lmargin=1in,rmargin=1in}


\usepackage{graphicx}
\usepackage[colorinlistoftodos]{todonotes}
\usepackage{setspace}
\usepackage{placeins}
\usepackage{enumitem}
\usepackage{titlesec}
\usepackage{cite}
\usepackage{comment}
\usepackage{caption}
\usepackage{subcaption}

\usepackage[colorlinks=true, allcolors=blue]{hyperref}


\usepackage{amsmath}
\usepackage{amsthm}
\usepackage{amssymb}
\usepackage{amsfonts}
\usepackage{bbm}
\usepackage{bm}
\usepackage{nicefrac}
\usepackage{mathtools}


\usepackage{tikz}

\usepackage{graphicx}
\usepackage{fullpage}
\usepackage{float}
\usepackage{wrapfig}
\usepackage{caption}

\usepackage{algpseudocode}
\usepackage{algorithm}
\usepackage{listings}

\usepackage{amsthm}\usepackage{dsfont}\usepackage{array}\usepackage{mathrsfs}\usepackage{cite}\usepackage{comment}\usepackage{mathrsfs}

\makeatother

\usepackage{babel}
\usepackage{color}
\usepackage{centernot}

\usepackage{babel}
\usepackage{sansmath}

\setlength{\topmargin}{-0.6 in}
\setlength{\textheight}{8.5 in}
\setlength{\headsep}{0.75 in}

\definecolor{ejc}{RGB}{255,0,0}
\definecolor{ps}{RGB}{0,0,200}

\title{The Phase Transition for the Existence of the Maximum
  Likelihood Estimate in High-dimensional Logistic Regression}

\author{Emmanuel
  J. Cand\`es\thanks{Department of Statistics, Stanford
    University, Stanford, CA 94305, U.S.A.} \thanks{Department of Mathematics,
    Stanford University, Stanford, CA 94305, U.S.A.} 
  \and Pragya Sur\footnotemark[1]}

\theoremstyle{plain}\newtheorem{lemma}{\textbf{Lemma}}\newtheorem{theorem}{\textbf{Theorem}}\newtheorem{proposition}{\textbf{Proposition}}

\theoremstyle{definition}

\theoremstyle{definition}


\newcommand{\bbeta}{\bm{\beta}}

\newcommand{\bX}{\bm{X}}

\newcommand{\bzero}{\bm{0}}

\newcommand{\dnorm}{{\mathcal{N}}}

\newcommand{\iid}{\stackrel{ \text{i.i.d.} }{\sim} }

\newcommand{\bu}{\bm{u}}

\newcommand{\R}{\mathbb{R}}

\newcommand{\eqd}{\stackrel{\mathrm{d}}{=}}

\newcommand{\by}{\bm{y}}
\newcommand{\bx}{\bm{x}}

\newcommand{\bA}{\bm{A}}

\newcommand{\bb}{\bm{b}}

\newcommand{\bZ}{\bm{Z}}

\newcommand{\bw}{\bm{w}}

\newcommand{\convd}{\stackrel{\mathrm{d}}{\rightarrow}}
\newcommand{\bV}{\bm{V}}

\newcommand{\bSigma}{\bm{\Sigma}}


\newcommand{\prob}{\mathbb{P}}

\newcommand{\convP}{\stackrel{\prob}{\longrightarrow}}

\newcommand{\bY}{\bm{Y}}

\newcommand{\bone}{\bm{1}}

\def\independenT#1#2{\mathrel{\rlap{$#1#2$}\mkern2mu{#1#2}}}
\newcommand\independent{\protect\mathpalette{\protect\independenT}{\perp}}

\newcommand{\One}[1]{{\mathbbm{1}}\left\{{#1}\right\}}

\newcommand{\Var}{\operatorname{Var}}
\newcommand{\goto}{\rightarrow}
\newcommand{\lspan}{\operatorname{span}}
\newcommand{\hmle}{h_{\text{MLE}}}

\makeatletter
\renewenvironment{proof}[1][\proofname] {
	\par\pushQED{\qed}\normalfont
	\topsep6\p@\@plus6\p@\relax
	\trivlist\item[\hskip\labelsep\bfseries#1\@addpunct{:}]
 	\ignorespaces
} {
	\popQED\endtrivlist\@endpefalse
}
\makeatother

\DeclareMathOperator{\E}{\mathbb{E}}

\begin{document}
\maketitle

\begin{abstract}
  This paper rigorously establishes that the existence of the maximum
  likelihood estimate (MLE) in high-dimensional logistic regression
  models with Gaussian covariates undergoes a sharp `phase
  transition'. We introduce an explicit boundary curve $\hmle$,
  parameterized by two scalars measuring the overall magnitude of the
  unknown sequence of regression coefficients, with the following
  property: in the limit of large sample sizes $n$ and number of
  features $p$ proportioned in such a way that $p/n \goto \kappa$, we
  show that if the problem is sufficiently high dimensional in the
  sense that $\kappa > \hmle$, then the MLE does not exist with
  probability one. Conversely, if $\kappa < \hmle$, the MLE
  asymptotically exists with probability one.
\end{abstract}

\section{Introduction}
\label{sec:introduction}

Logistic regression
\cite{nelder1972generalized,mccullagh1989generalized} is perhaps the
most widely used and studied non-linear model in the multivariate
statistical literature. For decades, statistical inference for this
model has relied on likelihood theory, especially on the theory of
maximum likelihood estimation and of likelihood ratios.  Imagine we
have $n$ independent observations $(\bx_i,y_i)$, $i = 1, \ldots, n$, 
where the response $y_i \in \{-1,1\}$ is linked to the covariates
$\bx_i \in \R^p$ via the logistic model
\[
  \mathbb{P}(y_i = 1 | \bx_i) =
  \sigma(\bx_i' \bbeta), \qquad \sigma(t) :=\frac{e^t}{1 + e^{t}}; 
\]
here, $\bbeta \in \R^p$ is the unknown vector of regression
coefficients. In this model, the log-likelihood is given by
\[
\ell(\bb) = \sum_{i = 1}^n -\log(1 + \exp(-y_i \bx_i'\bb))
\]
and, by definition, the maximum likelihood estimate (MLE) is any
maximizer of this functional.

\subsection{Data geometry and the existence of the MLE}
\label{sec:geometry}

The delicacy of ML theory is that the MLE does not exist in all
situations, even when the number $p$ of covariates is much smaller
than the sample size $n$. This is a well-known phenomenon, which
sparked several interesting series of investigation. One can even say
that characterizing the existence and uniqueness of the MLE in
logistic regression has been a classical problem in statistics. For
instance, every statistician knows that if the $n$ data points
$(\bx_i, y_i)$ are {\em completely separated} in the sense that that
there is a linear decision boundary parameterized by $\bb \in \R^p$
with the property
\begin{equation}
  \label{eq:separated}
  y_i \bx_i' \bb > 0, \text{ for all } i, 
\end{equation}
then the MLE does not exist. To be clear, \eqref{eq:separated} means
that the decision rule that assigns a class label equal to the sign of
$\bx_i'\bb$ makes no mistake on the sample. Every statistician also
knows that if the data points {\em overlap} in the sense that for
every $\bb \neq \bzero$, there is at least one data point that is
classified correcly ($y_i \bx_i' \bb > 0$) and at least another that
is classified incorrectly ($y_k \bx_k' \bb < 0$), then the MLE does
exist. The remaining situation, where the data points are {\em
  quasi-completely separated}, is perhaps less well-known to
statisticians: this occurs when for any decision $\bb \neq
\bzero$, \begin{equation}
  \label{eq:quasi}
  y_i \bx_i' \bb \ge 0,  \text{ for all } i, 
\end{equation}
where equality above holds for some of the observations. A useful
theorem of Albert and Anderson \cite{Albert1984MLE} states that the
MLE does not exist in this case either. {\em Hence, the MLE exists if
  and only if the data points overlap.}

Historically, \cite{Albert1984MLE} follows earlier work of Silvapulle
\cite{silvapulle1981existence}, who proposed necessary and sufficient
conditions for the existence of the MLE based on a geometric
characterization involving convex cones (see \cite{Albert1984MLE} for
additional references).  Subsequently, Santner and Duffy
\cite{santner1986note} expanded on the characterization from
\cite{Albert1984MLE} whereas Kaufman \cite{kaufmann1988existence}
established theorems on the existence and uniqueness of the minimizer
of a closed proper convex function. 
In order to detect separation, linear programming approaches have been
proposed on multiple occasions, see for instance,
\cite{Albert1984MLE,silvapulle1986existence,konis2007linear}. Detection
of complete separation was studied in further detail in
\cite{lesaffre1989partial,kolassa1997infinite}. Finally,
\cite{christmann2001measuring} analyzes the notion of regression depth
for measuring overlap in data sets.

\subsection{Limitations}

Although beautiful, the aforementioned geometric characterization does
not concretely tell us when we can expect the MLE to exist and when we
cannot. Instead, it trades one abstract notion, ``there is an MLE'',
for another, ``there is no separating hyperplane''.  To drive our
point home, imagine that we have a large number of covariates $\bx_i$,
which are independent samples from some distribution $F$, as is almost
always encountered in modern applications.  Then by looking at the
distribution $F$, the data analyst would like to be able to predict
when she can expect to find the MLE and she cannot. The problem is
that the abstract geometric separation condition does not inform her
in any way; she would have no way to know a priori whether the MLE
would go to infinity or not.

\subsection{Cover's result}

One notable exception against this background dates back to the
seminal work of Cover \cite{cover1964thesis,cover1965geometrical}
concerning the separating capacities of decision surfaces. When
applied to logistic regression, Cover's main result states the
following: assume that the $\bx_i$'s are drawn i.i.d.~from a
distribution $F$ obeying some specific assumptions and that the {\em
  class labels are independent from $\bx_i$} and have equal marginal
probabilities; i.e.~$\mathbb{P}(y_i = 1 | \bx_i) = 1/2$.  Then Cover
  shows that as $p$ and $n$ grow large in such a way that
  $p/n \rightarrow \kappa$, the data points asymptotically
  overlap---with probability tending to one---if $\kappa < 1/2$
  whereas they are separated---also with probability tending to
  one---if $\kappa > 1/2$. In the former case where the MLE exists,
  \cite{sur2017likelihood} refined Cover's result by calculating the
  limiting distribution of the MLE when the features $\bx_i$ are
  Gaussian.

Hence, the results from \cite{cover1964thesis,cover1965geometrical}
and \cite{sur2017likelihood} describe a phase transition in the
existence of the MLE as the dimensionality parameter $\kappa = p/n$
varies around the value 1/2. Therefore, a natural question is this:
\begin{center}
{\em Do phase
transitions exist in the case where the class labels $y_i$ actually
\underline{depend} on the features $\bx_i$?}
\end{center} Since likelihood based inference
procedures are used all the time, it is of significance to understand
when the MLE actually exists. This paper is about this question.

\subsection{Phase transitions}
\label{sec:pt} 


This work rigorously establishes the existence of a phase transition
in the logistic model with Gaussian covariates, and computes the phase
transition boundary explicitly.

\paragraph{Model} Since researchers routinely include an intercept in
the fitted model, we consider such a scenario as well. Throughout the
paper, we assume we have $n$ samples $(\bx_i, y_i)$ with Gaussian
covariates: 
\[
\bx_i \iid
\mathcal{N}(\bzero, \bSigma), \quad \mathbb{P}(y_i = 1 | \bx_i) =
\sigma(\beta_0 + \bx_i' \bbeta) = 1 - \mathbb{P}(y_i = -1 | \bx_i), 
\]
where the covariance $\bSigma$ is non-singular but otherwise
arbitrary.

\paragraph{Peek at the result} To describe our results succinctly,
assume the high-dimensional asymptotics from the previous section in
which $p/n \rightarrow \kappa$ (assumed to be less than one throughout
the paper). To get a meaningful result in diverging dimensions, we
consider a sequence of problems with $\beta_0$ fixed and
\begin{equation}
  \label{eq:snr1}
  \operatorname{Var}(\bx_i' \bbeta) \rightarrow \gamma_0^2.
\end{equation}
This is set so that the log-odds ratio $\beta_0 + \bx_i' \bbeta$ does
not increase with $n$ or $p$, so that the likelihood is not trivially
equal to either $0$ or $1$.  Instead,
\begin{equation}
  \label{eq:snr2}
  \sqrt{\E (\beta_0 + \bx_i' \bbeta)^2} \rightarrow \sqrt{\beta_0^2 +
    \gamma_0^2}  =: \gamma.
\end{equation}
In other words, we put ourselves in a regime where accurate estimates
of $\bbeta$ translate into a precise evaluation of a non-trivial
probability.

Our main result is that there is an explicit function $h_{\text{MLE}}$
  given in \eqref{eq:gmle} such that
 \[
\begin{array}{ccc}
  \kappa > h_{\text{MLE}}(\beta_0, \gamma_0) 
  &  \implies \quad  &
                       \mathbb{P}\{\text{MLE
                       exists}\}
                       \rightarrow
                       0, \\
  \kappa < h_{\text{MLE}}(\beta_0,\gamma_0)  
  & \implies \quad  &
    \mathbb{P}\{\text{MLE
    exists}\}
    \rightarrow
    1.
\end{array}                     
\]
Hence, the existence of the MLE undergoes a sharp change: below the
curves shown in Figure \ref{fig:formula}, the existence probability
asymptotically approaches $1$; above, it approaches $0$. Also note
that the phase-transition curve depends upon the unknown regression
sequence $\bbeta \in \R^p$ only through the intercept $\beta_0$ and
$\gamma_0^2 = \lim_{n, p \rightarrow \infty}
\operatorname{Var}(\bx_i'\bbeta)$.


The formula for the phase transition $h_{\text{MLE}}$ is new. As we
will see, it is derived from ideas from convex geometry.



\section{Main Result}
\label{sec:formula}

\subsection{Model with intercept}

Throughout the paper, for each $\beta_0 \in \R$ and $\gamma_0 \ge 0$,
we write
\begin{equation}
\label{eq:V}
(Y, V) \sim F_{\beta_0, \gamma_0} \quad \text{if} \quad (Y, V)  \eqd (Y, YX), 
\end{equation}
where $X \sim \dnorm(0,1)$, and
$\mathbb{P}(Y = 1 | X) = 1 - \mathbb{P}(Y = -1 | X) = \sigma(\beta_0 +
\gamma_0 X)$. 

\begin{figure}
\begin{center}
	\begin{tabular}{ccc}
          \includegraphics[width=6.5cm,keepaspectratio]{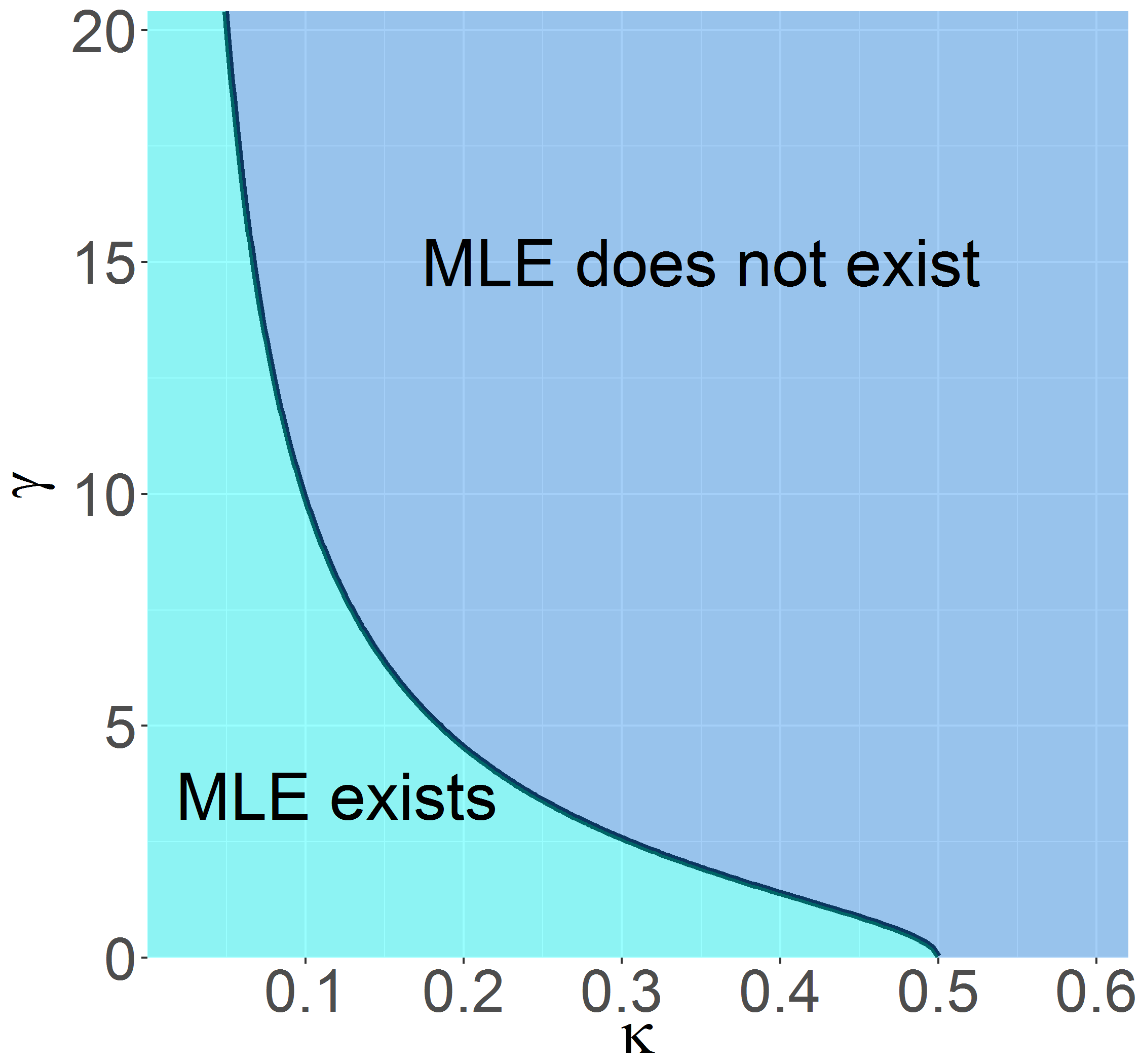} 
          & & \includegraphics[width=6.5cm,keepaspectratio]{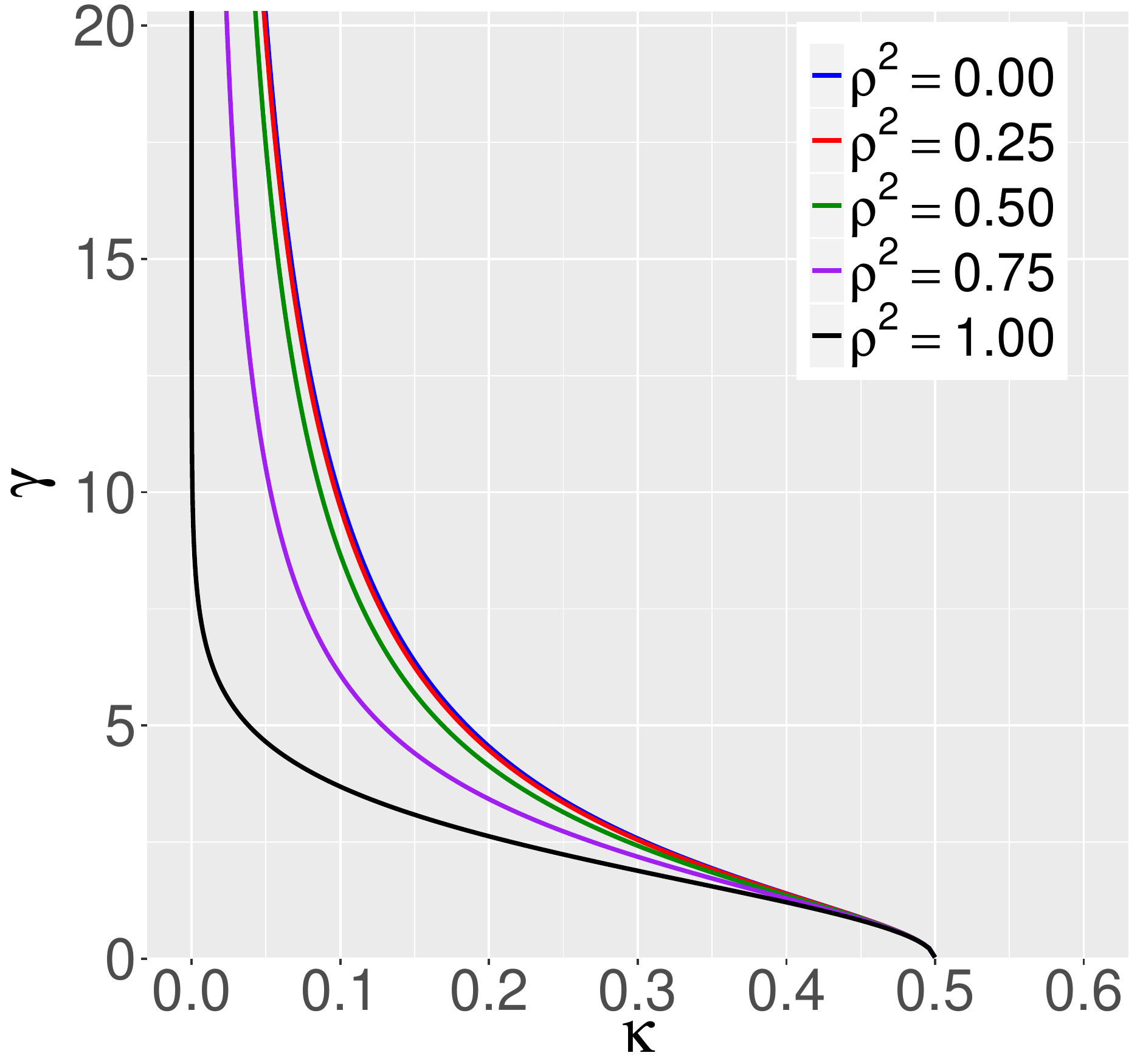} \tabularnewline
              (a) & & (b)   \tabularnewline
	\end{tabular}
\end{center}
\caption{Theoretical predictions from \eqref{eq:gmle}. (a) Boundary
  curve $\gamma \mapsto \hmle(0,\gamma)$ separating the regions where
  the MLE asymptotically exists and where it does not (in this case
  $\beta_0 = 0$). (b) Boundary curves
  $\gamma \mapsto \hmle(\rho \gamma,\sqrt{1-\rho^2} \gamma)$ for
  various values of $\rho$. The curve with $\rho = 0$ shown in blue is
  that from (a). It is hardly visible because it is close to that with
  $\rho^2 = 0.25$.}
\label{fig:formula}
\end{figure}

 \begin{theorem}
   \label{thm:main} Let $(Y,V) \sim F_{\beta_0,\gamma_0}$ and
   $Z \sim \dnorm(0,1)$ be independent random variables. Define
   \begin{equation}
     \label{eq:gmle}
     h_{\text{\em MLE}}(\beta_0,\gamma_0) = \min_{t_0, t_1 \in \R} \, \left\{\E (t_0 Y + t_1 V -
       Z)_+^2\right\}, 
   \end{equation}
   where $x_+ = \operatorname{max}(x,0)$ and we write
   $x_+^2 = (x_+)^2$ for short.
   Then in the setting from Section \ref{sec:pt}, 
  \[
\begin{array}{lll}
  \kappa > h_{\text{\em MLE}}(\beta_0, \gamma_0) 
  &  \implies \quad 
  &
    \lim_{n,p
    \rightarrow
    \infty} 
    \mathbb{P}\{\text{\em MLE
    exists}\}
    =
    0, \\
  \kappa < h_{\text{\em MLE}}(\beta_0,\gamma_0)  
  &  \implies \quad 
  &
    \lim_{n,p
    \rightarrow
    \infty} 
    \mathbb{P}\{\text{\em MLE
    exists}\}
    =
    1.
\end{array}                     
\]
\end{theorem}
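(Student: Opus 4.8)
The plan is to connect the existence of the MLE to a random geometry question via the Albert--Anderson criterion, and then to resolve that geometry question using the theory of Gaussian widths / statistical dimension of convex cones (à la Amelunxen--Lotz--McCoy--Tropp and Gordon's comparison theorems). Recall from Section~\ref{sec:geometry} that the MLE exists if and only if the data \emph{do not} admit a separating hyperplane, i.e.\ there is no $(b_0,\bb)$ with $y_i(b_0 + \bx_i'\bb) \ge 0$ for all $i$ and strict for at least one. Equivalently, writing $\bv_i = y_i \bx_i$, the MLE fails to exist precisely when the point $\bzero$ lies (in a suitable sense) in the conic hull of the vectors $(y_i, \bv_i) \in \R^{p+1}$; more precisely, separation is equivalent to feasibility of a linear system whose solvability is governed by whether a certain random cone intersects a subspace. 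So the first step is to reduce ``MLE exists'' to the event that two convex cones---one built from the data, one a linear subspace of dimension $n$ or $p$---share only the origin.

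The second step is to identify the relevant cone. Since $\bx_i \iid \dnorm(\bzero,\bSigma)$, after whitening ($\bx_i \mapsto \bSigma^{-1/2}\bx_i$, absorbing $\bSigma^{1/2}$ into $\bbeta$) we may assume $\bSigma = \Id$, and the quantity $\gamma_0^2 = \Var(\bx_i'\bbeta) = \|\bbeta\|^2$ (in the whitened parametrization) is the only feature of $\bbeta$ that survives; together with $\beta_0$ this explains why $\hmle$ depends only on $(\beta_0,\gamma_0)$. The correct object is the cone
\[
  \mathcal{C}_n = \Big\{ \bu \in \R^n \;:\; u_i \ge 0 \text{ and } \textstyle\sum_i u_i (y_i, \bv_i) \text{ lies in the span of the regression directions}\Big\},
\]
or, dually, one shows separation is equivalent to: there exist $t_0, \btheta$ with $y_i(t_0 + \bx_i'\btheta) - z_i \ge 0$ coordinatewise for a fresh Gaussian vector $\bz$ appearing through Gordon's escape-through-the-mesh argument. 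Running Gordon's min-max comparison inequality on the bilinear form $\bg'(\bX\btheta)$ splits the $(p+1)$-dimensional optimization over $(t_0,\btheta)$ from the $n$-dimensional Gaussian, and the per-coordinate contribution is exactly $\E(t_0 Y + t_1 V - Z)_+^2$ after reparametrizing $\btheta$ by its component along $\bbeta$ (giving the scalar $t_1 V = t_1 Y X$, $X\sim\dnorm(0,1)$) and its orthogonal norm (which, combined with the Gaussian $\bg$, produces the $-Z$ term). The threshold is then precisely $\kappa$ versus $\min_{t_0,t_1}\E(t_0 Y + t_1 V - Z)_+^2 = \hmle(\beta_0,\gamma_0)$: if $\kappa$ exceeds this the feasible region is asymptotically nonempty (separation, no MLE), if $\kappa$ is below it the region is empty (overlap, MLE exists).

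The third step is to make both directions rigorous and sharp. For the ``$\kappa > \hmle \Rightarrow$ no MLE'' direction one exhibits a separating hyperplane: take the minimizing $(t_0^\star, t_1^\star)$ in \eqref{eq:gmle}, lift it to $\btheta = t_1^\star \bbeta/\|\bbeta\|$ plus a small component in a random direction orthogonal to $\bbeta$, and show via concentration (law of large numbers over the $n$ nearly-i.i.d.\ coordinates, plus a uniform-over-the-sphere net argument for the orthogonal part) that $y_i(t_0^\star + \bx_i'\btheta) > 0$ for all $i$ with probability $\to 1$; the strict slack comes from the gap $\kappa - \hmle > 0$. For the converse one must show no separating hyperplane exists: this is the harder direction and is where Gordon's inequality (or the exact statistical-dimension computation for the data-generated cone, followed by the kinematic formula of Amelunxen et al.) is essential, because one needs a lower bound on a random min that holds uniformly over all directions $\btheta$, not just the optimal one. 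One shows $\min_{\|(t_0,\btheta)\|=1} \tfrac1n\sum_i \big(y_i(t_0+\bx_i'\btheta)\big)_- $ concentrates around a deterministic positive limit when $\kappa < \hmle$, so some point is always misclassified.

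The main obstacle is precisely this uniform control in the converse direction: the separation event is a union over the infinite-dimensional set of candidate hyperplanes, so one cannot simply invoke a pointwise LLN. The resolution is a Gaussian comparison (Gordon's theorem, or equivalently the convex Gaussian min-max theorem of Thrampoulidis--Oymak--Hassibi) that trades the hard min-max over $(\bX, \btheta)$ for a tractable min-max over $(\bg, \bh, t_0, t_1)$ in which the $n$-dimensional and $(p+1)$-dimensional pieces decouple; the remaining work is showing the decoupled objective concentrates and that its optimal value crosses zero exactly at $\kappa = \hmle(\beta_0,\gamma_0)$. A secondary technical point is handling the intercept $\beta_0$ and the fact that the $y_i$ depend on the $\bx_i$ (so the $\bv_i = y_i\bx_i$ are not Gaussian but mixtures), which is precisely what the distribution $F_{\beta_0,\gamma_0}$ in \eqref{eq:V} is designed to encode; one carries $(Y,V)$ through the comparison as the per-coordinate summand.
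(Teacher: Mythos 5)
Your overall architecture matches the paper's: reduce via Albert--Anderson to a separation event, whiten so that only $(\beta_0,\gamma_0)$ survives, recognize that separation is the event that the uniformly random $(p-1)$-dimensional subspace $\lspan(\bX_2,\ldots,\bX_p)$ meets the cone $\mathcal{C}(\lspan(\bY,\bV)) = \{\bw + \bu : \bw \in \lspan(\bY,\bV),\, \bu \ge \bzero\}$, and identify the per-coordinate functional $\E(t_0Y+t_1V-Z)_+^2$ as the object whose minimum gives the threshold. The paper runs this through the approximate kinematic formula of Amelunxen et al.\ (computing the statistical dimension $\delta(\mathcal{C}(\mathcal{W})) = n - \E\min_{\bw\in\mathcal{W}}\|(\bw-\bZ)_+\|^2$ and then proving concentration of $Q_n = \min_{t_0,t_1} n^{-1}\|(t_0\bY+t_1\bV-\bZ)_+\|^2$ around $\hmle$), which handles both directions symmetrically; you offer this as one option and a CGMT route as another, and either engine can be made to work.

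There is, however, a genuine gap in your treatment of the direction $\kappa > \hmle \Rightarrow$ no MLE. You propose to \emph{exhibit} a separating hyperplane by taking the minimizer $(t_0^\star,t_1^\star)$ of \eqref{eq:gmle}, lifting it to $\btheta = t_1^\star \bbeta/\|\bbeta\|$ plus ``a small component in a random direction orthogonal to $\bbeta$,'' and verifying $y_i(t_0^\star + \bx_i'\btheta)>0$ for all $i$ by a law of large numbers. This cannot work. For any direction $\bw$ in the orthogonal complement chosen independently of the data (deterministic or drawn from a low-dimensional family), the per-observation quantity $y_i(t_0 + \bx_i'\btheta)$ has the law of $t_0Y + t_1V + cG$ with $G\sim\dnorm(0,1)$ independent of $(Y,V)$, and this random variable is negative with probability bounded away from zero unless it is degenerate; hence \emph{all} $n$ constraints hold simultaneously with exponentially small probability. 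A concentration or LLN argument controls the \emph{average} of the violations, not their maximum, and a net over the sphere gives union bounds in the wrong direction (useful for showing no direction works, not that some direction works). The separating direction must be chosen \emph{adaptively} from the full $(p-1)$-dimensional random subspace --- it is precisely the richness of a subspace of dimension $\approx \kappa n$ with $\kappa > \hmle$ that guarantees a nontrivial intersection with $\mathcal{C}(\mathcal{W})$ --- and establishing that such an adaptive choice exists is exactly the ``intersection occurs'' half of the kinematic formula (equivalently, the hard half of Gordon's escape-through-a-mesh). In short, this direction is not the easy one; both directions require the conic-geometry machinery, applied to the statistical dimension $\delta(\mathcal{C}(\mathcal{W}))$, together with a proof that the (random, $(\bY,\bV)$-measurable) quantity $Q_n$ concentrates at $\hmle(\beta_0,\gamma_0)$ at rate $O_P(n^{-1/2})$. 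You would also need to dispose of the boundary event that $\lspan(\bY,\bV)$ itself meets the orthant (the paper's ``No MLE Single'' event), which is a minor but necessary step.
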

This result is proved in Section \ref{sec:conic_geom}. As the reader
will gather from checking our proof, our convergence result is
actually more precise. We prove that the transition occurs in an
interval of width $O(n^{-1/2})$: take any sequence
$\lambda_n \goto \infty$; then
 \[
\begin{array}{lll}
  p/n > h_{\text{MLE}}(\beta_0, \gamma_0) + \lambda_n n^{-1/2} 
  &  \implies \quad 
  &
    \lim_{n,p
    \rightarrow
    \infty} 
    \mathbb{P}\{\text{MLE
    exists}\}
    =
    0, \\
  p/n < h_{\text{MLE}}(\beta_0, \gamma_0) - \lambda_n n^{-1/2}   
  &  \implies \quad 
  &
    \lim_{n,p
    \rightarrow
    \infty} 
    \mathbb{P}\{\text{MLE
    exists}\}
    =
    1.
\end{array}
\]  

It is not hard to see that $h_{\text{MLE}}$ defined for values of
$\beta_0 \in \R$ and $\gamma_0 \ge 0$ is symmetric in its first
argument,
$h_{\text{MLE}}(\beta_0, \gamma_0) = h_{\text{MLE}}(-\beta_0,
\gamma_0)$.
We thus only consider the case where $\beta_0 \ge 0$. Over the
non-negative orthant $\R^2_+$, $h_{\text{MLE}}(\beta_0, \gamma_0)$ is
a decreasing function of both $\beta_0$ and $\gamma_0$. Figure
\ref{fig:formula} shows a few phase-transition curves.

\subsection{Special cases}

It is interesting to check the predictions of formula \eqref{eq:gmle}
for extreme values of $\gamma := \sqrt{\beta_0^2 + \gamma_0^2}$,
namely, $\gamma= 0$ (no signal) and $\gamma \goto \infty$ (infinite
signal).
\begin{itemize}
\item At $\gamma = 0$, $Y$ and $V$ are independent, and $Y$ is a
  Rademacher variable whereas $V$ is a standard Gaussian. The variable
  $t_0 Y + t_1 V - Z$ is, therefore, symmetric and
  \[
    h_{\text{MLE}} (0,0) = \min_{t_0,t_1} \, \frac{1}{2} \E (t_0 Y +
      t_1 V - Z)^2= \min_{t_0,t_1} \, \frac{1}{2} (t_0^2 + t_1^2 + 1) 
      = \frac{1}{2}.
  \]
  Hence, this recovers and extends Cover's result: in the limit where
  $\beta_0^2 + \bbeta' \bSigma \bbeta \goto 0$ (this includes the case
  where $y_i$ is symmetric and independent of $\bx_i$ as in
  \cite{cover1965geometrical, cover1964thesis}), we obtain that the
  phase transition is at $\kappa = 1/2$.
  
\item When $\gamma_0 \goto \infty$,
  $V \convd |Z'|$,
  $Z' \sim \dnorm(0,1)$. Hence, plugging $t_0 = 0$ into \eqref{eq:gmle}
  gives
  \[
    \lim_{t_1 \goto -\infty} \, \E (t_1|Z'|- Z)_+^2 = 0. 
  \]
  If $\beta_0 \goto \infty$, $Y \convd 1$ and plugging $t_1 = 0$ into
  \eqref{eq:gmle} gives
  \[
  \lim_{t_0 \goto -\infty} \, \E (t_0- Z)_+^2 = 0. 
  \]
  Either way, this says that in the limit of infinite signal strength,
  we must have $p/n \goto 0$ if we want to guarantee the existence of
  the MLE.
\end{itemize}

We simplify \eqref{eq:gmle} in other special cases below. 
\begin{lemma}
  \label{lem:special}
  In the setting of Theorem \ref{thm:main}, consider the special case
  $\gamma_0 = 0$, where the response does not asymptotically depend on
  the covariates: we have 
  \begin{equation}
    \label{eq:beta0}
     h_{\text{\em MLE}}(\beta_0,0) = \min_{t \in \R} \, \left\{\E
      (t Y - Z)_+^2\right\}.
  \end{equation}
  In the case $\beta_0 = 0$ where the marginal probabilities are
  balanced, $\mathbb{P}(y_i = 1) = \mathbb{P}(y_i = -1) = 1/2$,
  \begin{equation}
    \label{eq:gamma0}
    h_{\text{\em MLE}}(0,\gamma_0) = \min_{t \in \R} \, \left\{\E
      (t V - Z)_+^2\right\}.
  \end{equation}
\end{lemma}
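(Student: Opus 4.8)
The plan is to establish, in each of the two special cases, a distributional reduction for the pair $(Y,V)$, and then observe that one of the two optimization variables in \eqref{eq:gmle} may be set to zero without loss of optimality; the latter step is an immediate consequence of the convexity of $x \mapsto x_+^2$ together with Jensen's inequality.

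\emph{Case $\gamma_0 = 0$.} Here $\sigma(\beta_0 + \gamma_0 X) = \sigma(\beta_0)$ does not depend on $X$, so $Y \perp X$; and since $X \sim \dnorm(0,1)$ is symmetric, $V = YX \sim \dnorm(0,1)$ as well, independently of $Y$. As $Z$ is independent of $(Y,V)$ by construction, $V$ is independent of $(Y,Z)$ and $\E V = 0$. For any fixed $t_0, t_1$, conditioning on $(Y,Z)$ and applying Jensen's inequality to the convex function $x \mapsto x_+^2$,
\[
\E\big[(t_0 Y - Z + t_1 V)_+^2 \mid Y, Z\big] \ge \big(t_0 Y - Z + t_1 \E V\big)_+^2 = (t_0 Y - Z)_+^2,
\]
so $\E(t_0 Y + t_1 V - Z)_+^2 \ge \E(t_0 Y - Z)_+^2$, with equality at $t_1 = 0$. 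Minimizing over $t_0$ (renamed $t$) yields \eqref{eq:beta0}.

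\emph{Case $\beta_0 = 0$.} The key fact here is that $Y$ is a Rademacher variable independent of $V$. Indeed, $\mathbb{P}(Y = 1 \mid X = x) = \sigma(\gamma_0 x)$ while $\mathbb{P}(Y = -1 \mid X = -x) = 1 - \sigma(-\gamma_0 x) = \sigma(\gamma_0 x)$; since $X \eqd -X$ this gives $(Y,X) \eqd (-Y,-X)$, hence $(Y,V) = (Y, YX) \eqd (-Y, (-Y)(-X)) = (-Y, V)$, i.e.\ $\mathbb{P}(Y = 1 \mid V) = \mathbb{P}(Y = -1 \mid V) = 1/2$. Consequently $Y \perp (V, Z)$ with $Y$ Rademacher. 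For fixed $t_0, t_1$, condition on $(V, Z)$ and set $W = t_1 V - Z$; midpoint convexity of $x \mapsto x_+^2$ gives
\[
\E\big[(t_0 Y + W)_+^2 \mid V, Z\big] = \tfrac12 (W + t_0)_+^2 + \tfrac12 (W - t_0)_+^2 \ge W_+^2,
\]
so $\E(t_0 Y + t_1 V - Z)_+^2 \ge \E(t_1 V - Z)_+^2$, with equality at $t_0 = 0$; minimizing over $t_1$ (renamed $t$) gives \eqref{eq:gamma0}.

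The only genuinely non-routine step is verifying the distributional identities, and in particular the symmetry argument showing that $Y$ decouples from $V$ when $\beta_0 = 0$; once these are in place, the reduction of the two-variable minimization to a one-variable one in both cases follows directly from convexity and Jensen. I do not anticipate any further obstacle.
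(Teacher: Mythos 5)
Your proof is correct and follows essentially the same route as the paper: establish the independence structure of $(Y,V)$ in each special case and then use convexity to argue that one of the two variables in \eqref{eq:gmle} can be set to zero; your treatment of the $\beta_0=0$ case is identical to the paper's. The only (harmless) variation is in the $\gamma_0=0$ case, where you apply Jensen's inequality in $V$ (mean zero, independent of $(Y,Z)$), whereas the paper absorbs $t_1 V - Z$ into a single Gaussian $\sqrt{1+t_1^2}\,Z$ and rescales; both arguments are valid and equally short.
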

\begin{proof}
  Consider the first assertion. In this case, it follows from the
  definition \eqref{eq:V} that $(Y,V) \eqd (Y,X)$ where $Y$ and $X$
  are independent, $\mathbb{P}(Y = 1) = \sigma(\beta_0)$ and
  $X \sim \dnorm(0,1)$. Hence,
\begin{align*}
  h_{\text{MLE}}(\beta_0,0) = \min_{t_0,t_1} \, \E (t_0 Y -
  \sqrt{1+t_1^2} Z)_+^2 & = \min_{t_0,t_1} \, (1+t_1^2)
                          \E(t_0/\sqrt{1+t_1^2} Y - Z)_+^2 \\
                        & =  \min_{t'_0,t_1} \, (1+t_1^2)
                          \E(t'_0 Y - Z)_+^2 
\end{align*}
and the minimum is clearly achieved at $t_1 = 0$. For the second
assertion, a simple calculation reveals that $Y$ and $V$ are
independent and $\mathbb{P}(Y = 1) = 1/2$. By convexity of the mapping
$Y \mapsto (t_0 Y + t_1 V - Z)_+^2$, we have that
\[
  \E \{(t_0 Y + t_1 V - Z)_+^2 \, | \, V, Z\} \ge (\E \{t_0 Y | V, Z\}
  + t_1 V - Z)_+^2 = (t_1 V - Z_+)^2.
\]
Hence, in this case, the miminum in \eqref{eq:gmle} is achieved at
$t_0 = 0$.
\end{proof}

\subsection{Model without intercept}
\label{sec:without}

An analogous result holds for a model without intercept. Its proof is
the same as that of Theorem \ref{thm:main}, only simpler. It is,
therefore, omitted. 
\begin{theorem}
  \label{thm:cor} Assume $\beta_0 = 0$ and consider fitting a model
  without an intercept. If $V$ has the marginal distribution from
  Theorem \ref{thm:main} and is independent from $Z \sim \dnorm(0,1)$,
  then the conclusions from Theorem \ref{thm:main} hold with the
  phase-transition curve given in \eqref{eq:gamma0}.  Hence, the
  location of the phase transition is the same whether we fit an
  intercept or not.
\end{theorem}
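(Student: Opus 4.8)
\emph{Proof proposal.} The argument is the one used for Theorem~\ref{thm:main}, only with one structured coordinate fewer and no intercept to track; I recall that scheme and indicate the simplifications.

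\textbf{Step 1: geometric reduction.} By the Albert--Anderson characterization recalled in Section~\ref{sec:geometry}, the MLE of the intercept-free model fails to exist exactly when $y_i\bx_i'\bb\ge 0$ for all $i$ for some $\bb\neq\bzero$. Since $\bx_i=\bSigma^{1/2}\bm{g}_i$ with $\bm{g}_i\iid\dnorm(\bzero,\Id_p)$ and this event is invariant under $\bb\mapsto\bSigma^{1/2}\bb$, assume $\bSigma=\Id$; then the signal direction has norm $\|\bbeta\|\to\gamma_0$. Decompose $\bx_i=X_i\,\bbeta/\|\bbeta\|+\bx_i^{\perp}$: because $\beta_0=0$, the pair $(y_i,y_iX_i)$ has exactly the law $(Y,V)$ of Theorem~\ref{thm:main} with $\beta_0=0$, and the orthogonal parts $y_i\bx_i^{\perp}$ --- symmetric and independent of $y_i$ --- assemble into an $n\times(p-1)$ matrix $\bH$ with i.i.d.\ standard Gaussian entries, independent of $(V_i)_{i\le n}$. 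Writing $\bb=a\,\bbeta/\|\bbeta\|+\bb^{\perp}$, separation becomes: there is $(a,\bu)\neq 0$ with $aV_i+\bh_i'\bu\ge 0$ for all $i$; equivalently, up to an event of probability $o(1)$ (here using $\kappa<1$, so that $[\bV\ \bH]$ has full column rank w.h.p.),
\[
\text{the MLE does not exist}\quad\Longleftrightarrow\quad \mathrm{col}\big([\bV\ \bH]\big)\cap\R^n_{+}\neq\{\bzero\}.
\]
This is the same situation as in the proof of Theorem~\ref{thm:main}, with the structured block being the single column $\bV$ in place of the pair $[\bY\ \bV]$ (no intercept, hence no $\bY$-column).

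\textbf{Step 2: conic geometry.} Condition on $\bV$. By the approximate conic kinematic formula, a uniformly random $(p-1)$-plane --- here $\mathrm{col}(\bH)$, whose law is rotationally invariant and independent of $\bV$ --- meets a fixed closed convex cone $C$ nontrivially with probability $\to1$ when $(p-1)+\delta(C)>n$, and $\to0$ when the reverse strict inequality holds, $\delta$ being the statistical dimension and the window having width $O(\sqrt n)$. Take $C=\R^n_{+}+\mathrm{span}\{\bV\}$. Since $\big(\R^n_{+}+\mathrm{span}\{\bV\}\big)^{\circ}=(-\R^n_{+})\cap\{\bV\}^{\perp}$, whose statistical dimension equals $\delta(\R^n_{+}\cap\{\bV\}^{\perp})$ by sign symmetry, the complementarity relation $\delta(K)+\delta(K^{\circ})=n$ turns the threshold into $(p-1)\gtrless\delta(\R^n_{+}\cap\{\bV\}^{\perp})$. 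Moreau's decomposition plus Lagrangian duality for the single constraint $\langle\bw,\bV\rangle=0$ gives the closed form
\[
\delta\big(\R^n_{+}\cap\{\bV\}^{\perp}\big)=\E_{\bm{g}}\Big[\min_{b\in\R}\big\|(\bm{g}-b\bV)_{+}\big\|_2^{2}\Big],\qquad \bm{g}\sim\dnorm(\bzero,\Id_n).
\]
For fixed $b$, $\tfrac1n\|(\bm{g}-b\bV)_{+}\|_2^{2}\to\E(G-bV)_{+}^{2}$ by the law of large numbers ($G\sim\dnorm(0,1)$ independent of $V$); this limit is coercive in $b$, so the minimiser stays bounded and the minimum passes to the limit, whence $\tfrac1n\delta(\R^n_{+}\cap\{\bV\}^{\perp})\to\min_{b}\E(G-bV)_{+}^{2}$. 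Because $G$ is symmetric and independent of $V$, $\min_{b}\E(G-bV)_{+}^{2}=\min_{t}\E(tV-Z)_{+}^{2}=\hmle(0,\gamma_0)$ by \eqref{eq:gamma0}. Since $(p-1)/n\to\kappa$, the threshold becomes $\kappa\gtrless\hmle(0,\gamma_0)$: for $\kappa>\hmle(0,\gamma_0)$ the data separate and the MLE does not exist w.h.p., while for $\kappa<\hmle(0,\gamma_0)$ they overlap and the MLE exists w.h.p. The last assertion of the theorem follows at once, since Lemma~\ref{lem:special} shows the two-variable minimum in \eqref{eq:gmle} collapses to this very one-variable minimum when $\beta_0=0$ (the $t_0Y$ term drops from the minimiser), so the intercept and intercept-free curves coincide.

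\textbf{Step 3: what remains, and the main obstacle.} Compared with Theorem~\ref{thm:main} this is strictly easier: one Lagrange multiplier and one structured column fewer, no intercept bookkeeping. The only delicate points are those already present for Theorem~\ref{thm:main}. The substantive one is the identification of $\delta(\R^n_{+}\cap\{\bV\}^{\perp})$ and its concentration around $n\,\hmle(0,\gamma_0)$: the closed form is clean, but upgrading the pointwise law of large numbers to ``minimum through the limit'' requires a quantitative coercivity bound pinning the minimising $b$ to a compact set, and the sharp $O(n^{-1/2})$ width of the transition window needs combining the kinematic formula's $O(\sqrt n)$ edge (over $\bH$) with an $O(\sqrt n)$ central-limit fluctuation of $\delta(\R^n_{+}\cap\{\bV\}^{\perp})$ (over $\bV$). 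The rest is bookkeeping: checking that $\R^n_{+}+\mathrm{span}\{\bV\}$ is almost surely a proper cone so the formula applies; replacing $\|\bbeta\|$ by $\gamma_0$ and $p-1$ by $p$, which is absorbed into the window using continuity of $\gamma_0\mapsto\hmle(0,\gamma_0)$; and discarding the $o(1)$ rank-deficiency / quasi-complete-separation event. I expect the concentration-of-statistical-dimension step to be the crux. (An equivalent route avoids conic geometry and analyses the feasibility program $\{aV_i+\bh_i'\bu\ge0\}$ directly via a convex Gaussian min--max comparison, again reducing to $\min_t\E(tV-Z)_{+}^{2}$.)
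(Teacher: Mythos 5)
Your proposal is correct and follows essentially the same route the paper intends: the paper omits this proof precisely because it is the Theorem \ref{thm:main} argument with the structured block $\lspan(\bY,\bV)$ replaced by the single column $\lspan(\bV)$, and that is exactly what you reconstruct (Albert--Anderson reduction, kinematic formula, statistical dimension of $\mathcal{C}(\lspan(\bV))$, then the one-variable analogue of Theorem \ref{thm:minmin}, whose proof in Section \ref{sec:proof} specializes verbatim). Your detour through the polar cone and the complementarity $\delta(K)+\delta(K^{\circ})=n$ is an equivalent repackaging of the paper's Lemma \ref{lem:statdim}, and you correctly flag the uniform-convergence-of-the-minimum step as the only substantive piece of analysis.
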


\subsection{Comparison with empirical results}
\label{sec:empirical}

We compare our asymptotic theoretical predictions with the results of
empirical observations in finite samples. For a given data set, we can
numerically check whether the data is separated by using linear
programming techniques, see Section \ref{sec:geometry}. (In our setup,
it can be shown that \emph{quasi-complete separation} occurs with zero
probability). To detect separability, we study whether the program
\cite{konis2007linear}
\begin{equation}
  \label{eq:LP}
  \begin{array}{ll}
    \text{maximize} & \quad \sum_{i = 1}^n y_i(b_0 + \bx_i'\bb)\\
    \text{subject to} & \quad y_i (b_0 + \bx_i'\bb)\ge 0, \, i = 1, \ldots, n\\
                    & \quad -1 \le b_0 \le 1, -\bone \le \bb \le \bone
\end{array}
\end{equation}
has a solution or not. For any triplet $(\kappa,\beta_0, \gamma_0)$,
we can thus estimate the probability
$\hat{\pi}(\kappa,\beta_0,\gamma_0)$ that complete separation does not
occur (the MLE exists) by repeatedly simulating data with these
parameters and solving \eqref{eq:LP}.

Below, each simulated data set follows a logistic model with
$n = 4,000$, $p = \kappa \, n$, i.i.d.~Gaussian covariates with
identity covariance matrix (note that our results do not depend on the
covariance $\bSigma$) and $\bbeta$ selected appropriately so that
$\Var(\bx_i'\bbeta) = \gamma_0^2$. We consider a fixed rectangular
grid of values for the pair $(\kappa,\gamma)$ where the $\kappa$ are
equispaced between $0$ and $0.6$ and the $\gamma$'s---recall that
$\gamma = \sqrt{\beta_0^2 + \gamma_0^2}$---are equispaced between $0$
and $10$.  For each triplet $(\kappa,\beta_0, \gamma_0)$, we estimate
the chance that complete separation does not occur (the MLE exists) by
averaging over $50$ i.i.d.~replicates.

Figure \ref{fig:empirical} (a) shows empirical findings for a model
without intercept; that is, $\beta_0 = 0$, and the other regression
coefficients are here selected to have equal magnitude. 
Observe that the MLE existence probability undergoes a sharp phase
transition, as predicted.  The phase transition curve predicted from
our theory (red) is in excellent agreement with the boundary between
high and low probability regions.  Figure \ref{fig:empirical} (b)
shows another phase transition in the setting where $\gamma_0=0$ so
that $\beta_0 = \gamma$. The $y$-axis is here chosen to be the
marginal distribution of the response,
i.e.~$\prob(y_i = 1) = e^\gamma/(1+e^\gamma)$.  Once again, we observe
the sharp phase transition, as promised, and an impeccable alignment
of the theoretical and empirical phase transition curves.  We also see
that when the response distribution becomes increasingly asymmetric,
the maximum dimensionality $\kappa$ decreases, as expected. If $y_i$
has a symmetric distribution, we empirically found that the
MLE existed for all values of $\kappa$ below $0.5$ in all
replications. For $\prob(y_i=1)=0.9$, however, the MLE existed
(resp.~did not exist) if $\kappa < 0.24$ (resp.~if $\kappa > 0.28$) in
all replications. For information, the theoretical value of the phase
transition boundary at $\prob(y_i=1)=0.9$ is equal to
$\kappa = 0.255$.
  
\begin{figure}
\begin{center}
	\begin{tabular}{ccc}
          \includegraphics[width=0.45\textwidth,keepaspectratio]{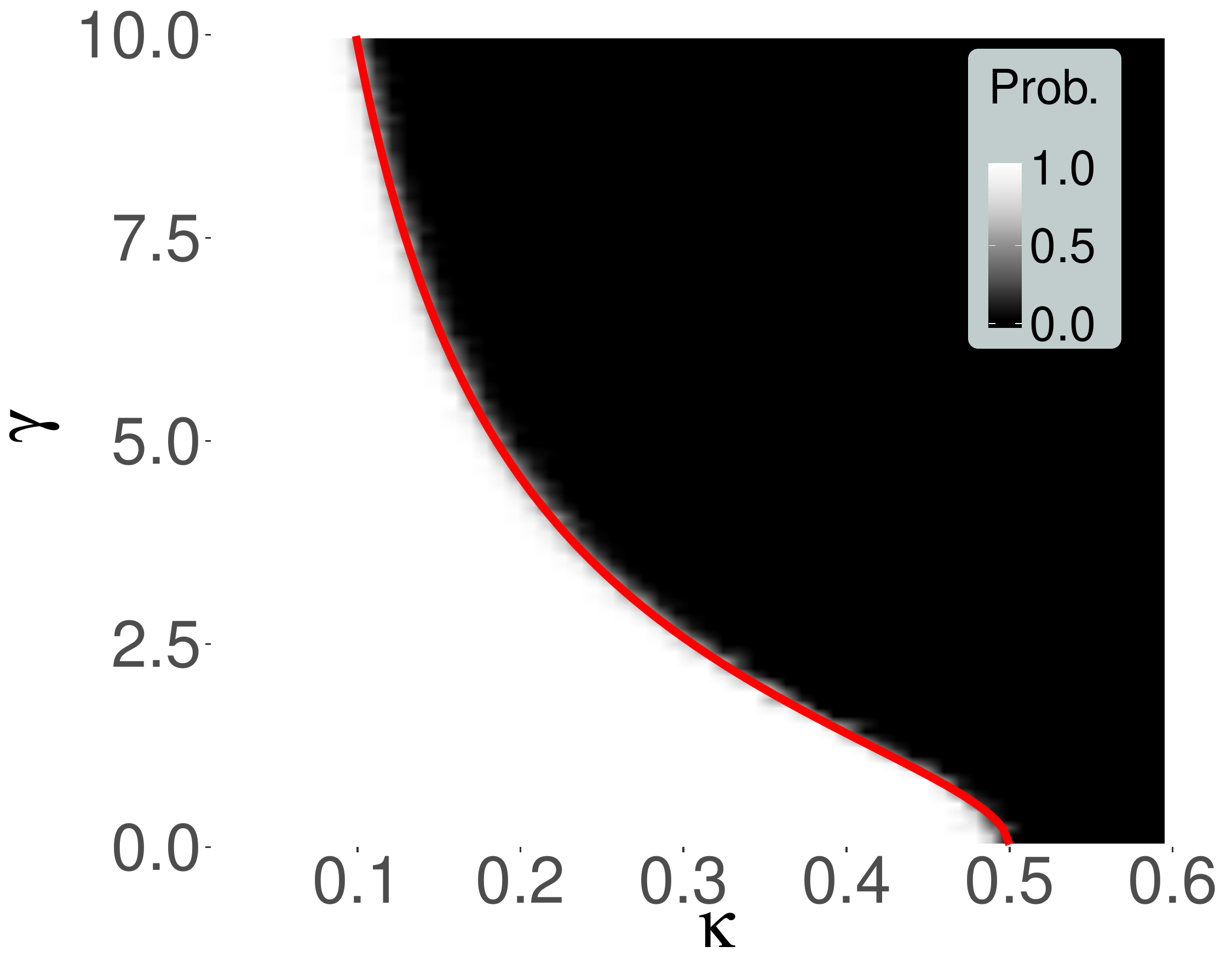} 
          & & \includegraphics[width=0.45\textwidth,keepaspectratio]{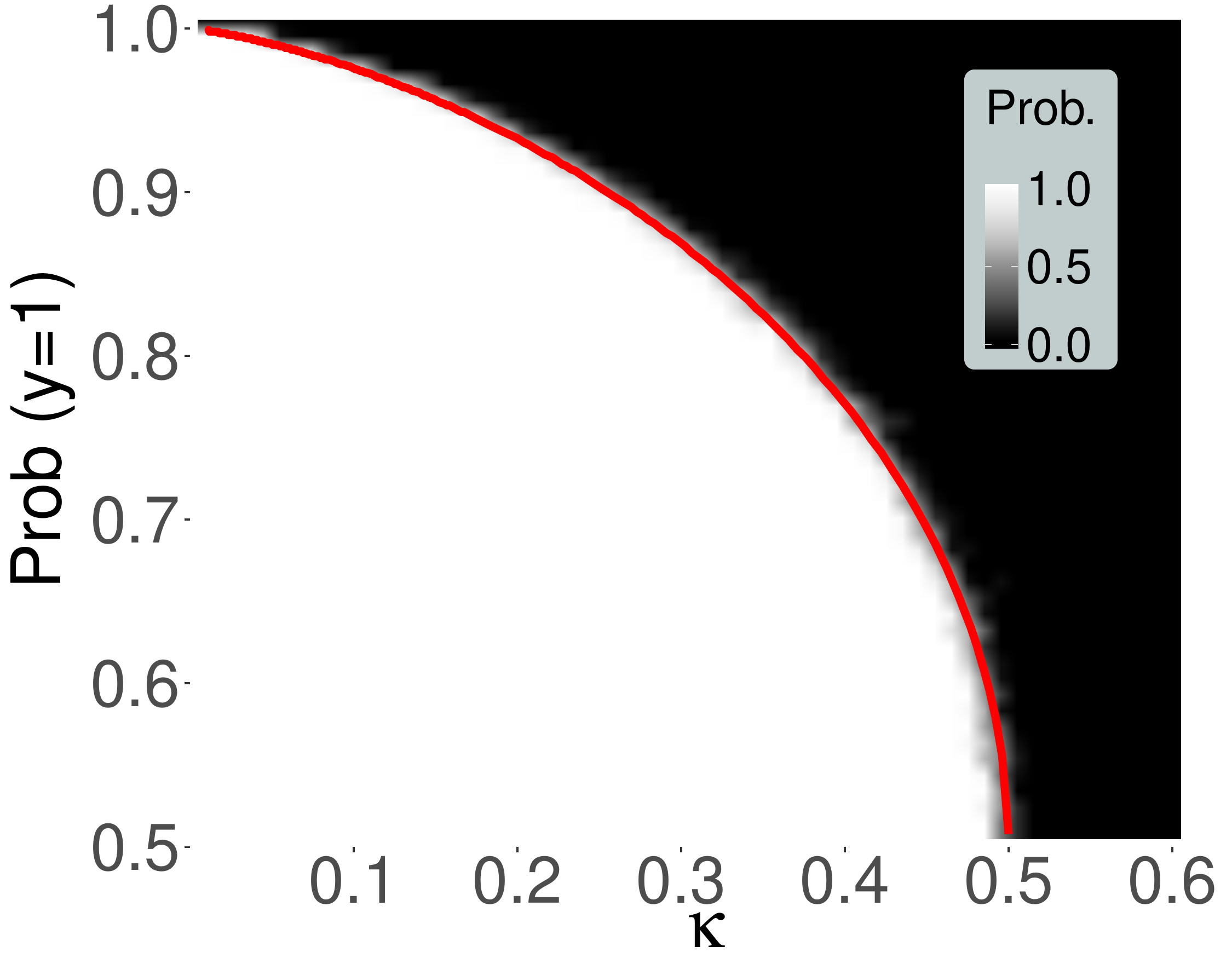} \tabularnewline
              (a) & & (b)   \tabularnewline
	\end{tabular}
\end{center}
\caption{Empirical probability that the MLE exists (black is zero, and
  white is one) estimated from $50$ independent trials for each
  `pixel'. (a) Model without intercept in which $\beta_0 = 0$ and
  $\gamma_0 =\gamma$, with theoretical phase transition curve from
  \eqref{eq:gamma0} in red (this is the same curve as in Figure
  \ref{fig:formula}(a)). (b) Model with $\gamma_0=0$,
  $\beta_0 =\gamma$ and the theoretical phase transition curve from
  \eqref{eq:beta0} in red.  The y-axis is here chosen to be the
  marginal probability $\prob(y_i = 1) = e^\gamma/(1+e^\gamma)$.}
\label{fig:empirical}
\end{figure}

\section{Conic Geometry}
\label{sec:conic_geom}

\renewcommand{\P}{\mathbb{P}}

This section introduces ideas from conic geometry and proves our main
result. 
We shall use the characterization from Albert and Anderson
\cite{Albert1984MLE} reviewed in Section \ref{sec:geometry}; recall
that the MLE does not exist if and only if there is
$(b_0, \bb) \neq \bzero$ such that $y_i \, (b_0 + \bx_i' \bb) \ge 0$
for all $i = 1, \ldots, n$. In passing, the same conclusion holds for
the probit model and a host of related models.

\subsection{Gaussian covariates}

\newcommand{\bz}{{\bm z}}
\renewcommand{\bV}{{\bm V}}

Write $\bx_i \sim \dnorm(\bzero, \bSigma)$ as
$\bx_i = \bSigma^{1/2} \bz_i$, where
$\bz_i \sim \dnorm(\bzero, {\bm I})$. Doing this, we immediately see
that the MLE does not exist if and only if there is $(b_0, \bb) \neq \bzero$
such that
\[
y_i \, (b_0 + \bz_i' \bSigma^{1/2} \bb) \ge 0, \quad \forall i. 
\]
This is equivalent to the existence of
$(b_0, {\bm \theta}) \neq \bzero$ such that
$y_i \, (b_0 + \bz_i' {\bm \theta}) \ge 0$ for all $i$. In words,
multiplication by a non-singular matrix preserves the existence of a
separating hyperplane; that is to say, there is a hyperplane in the
`$z$ coordinate' system (where the variables have identity covariance)
if and only if there is a separating hyperplane in the `$x$
coordinate' system (where the variables have general non-singular
covariance).  Therefore, it suffices to assume that the covariance is
the identity matrix, which we do from now on.

We thus find ourselves in a setting where the $p$ predictors are
independent standard normal variables and the regression sequence is
fixed so that
$\operatorname{Var}(\bx' \bbeta) = \|\bbeta\|^2 = \gamma_0^2$ (the
theorem assumes that this holds in the limit but this does not
matter).  By rotational invariance, we can assume without loss of
generality that all the signal is in the first coordinate; that is,
\[
  \P(y_i = 1 | \bx_i) = \sigma(\beta_0 + \gamma_0 x_{i1})
\]
since this leaves invariant the joint distribution of $(\bx_i, y_i)$.

At this point, it is useful to introduce some notation. Let
$(X_1, \ldots, X_p)$ be independent standard normals. Then
\[
  (\bx_i, y_i) \eqd (X_1, \ldots, X_p; Y),  
\]
where $\P(Y = 1 | X_1, \ldots, X_p) = \sigma(\beta_0 + \gamma_0
X_1)$. It thus follows that
\begin{equation}
  \label{eq:joint}
  (y_i, y_i \, \bx_i) \eqd (Y, V, X_2, \ldots, X_p), \qquad \begin{array}{l} Y, V \sim F_{\beta_0, \gamma_0},\\ (X_2, \ldots, X_p) \sim \dnorm(\bzero, {\bm I_{p-1}}),
                                                 \\ (Y,V) \independent (X_2, \ldots, X_p).\end{array}
\end{equation}
This yields a useful characterization: 
\begin{proposition}
  \label{prop:mle} 
  Let the $n$-dimensional vectors $(\bY, \bV, \bX_2, \ldots, \bX_p)$
  be $n$ i.i.d.~copies of $(Y, V, X_2,$ $\ldots, X_p)$ distributed as
  in \eqref{eq:joint}. Then if $p < n - 1$,
  \begin{equation}
  \label{eq:noMLE}
  \P\{\text{\em no MLE}\} =
  \P\{ \operatorname{span}(\bY, \bV, \bX_2, \ldots, \bX_p) \cap \R_+^n \neq \{\bzero\}\}. 
\end{equation}
Here and below, $\R^n_+$ is the nonnegative orthant.
\end{proposition}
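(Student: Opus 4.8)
The plan is to read \eqref{eq:noMLE} off the Albert--Anderson characterization together with the distributional identity \eqref{eq:joint}, the only genuine point being to discard a degenerate probability-zero configuration. By the two reductions already made it suffices to treat identity covariance with all the signal in the first coordinate. Collect the data into the $n\times(p+1)$ matrix $\bM$ whose $i$th row is $(y_i,\,y_i x_{i1},\ldots,y_i x_{ip})$, so that for $\bm{c}=(b_0,b_1,\ldots,b_p)^{\tp}$ one has $(\bM\bm{c})_i=y_i(b_0+\bx_i'\bb)$; hence the Albert--Anderson condition ``no MLE'' reads ``$\exists\,\bm{c}\neq\bzero$ with $\bM\bm{c}\in\R_+^n$.'' By \eqref{eq:joint} and independence across $i$, the columns of $\bM$ --- the label column $\by$ and the $\by$-scaled covariate columns --- have exactly the joint law of $(\bY,\bV,\bX_2,\ldots,\bX_p)$ in the statement, so it remains to show that the events ``$\exists\,\bm{c}\neq\bzero:\bM\bm{c}\in\R_+^n$'' and ``$\operatorname{span}(\bY,\bV,\bX_2,\ldots,\bX_p)\cap\R_+^n\neq\{\bzero\}$'' have equal probability.

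One direction is immediate: a nonzero nonnegative $\bv$ in the span can be written $\bv=\bM\bm{c}$, and $\bv\neq\bzero$ forces $\bm{c}\neq\bzero$. For the converse the only loophole is that the coefficient vector furnished by Albert--Anderson might satisfy $\bM\bm{c}=\bzero$, exhibiting only the trivial element of the span. I would close this by proving that the $p+1$ columns $\bY,\bV,\bX_2,\ldots,\bX_p$ are almost surely linearly independent, so that $\bm{c}\mapsto\bM\bm{c}$ is injective; this is exactly where $p<n-1$ (indeed $p\le n-1$ already suffices) enters. Since $Y\in\{-1,1\}$, the vector $\bY$ is a fixed nonzero element of $\R^n$; conditionally on $\bY$ the vectors $\bV,\bX_2,\ldots,\bX_p$ are $p$ independent random vectors, each absolutely continuous on $\R^n$ --- for $\bX_2,\ldots,\bX_p$ by Gaussianity, and for $\bV$ because $V_i\mid Y_i$ (being $Y_i$ times a conditionally Gaussian $X_{i1}$) has a density on $\R$ --- and $p+1\le n$ guarantees that the set of $p$-tuples linearly dependent with the fixed $\bY$ is a proper algebraic subset of $(\R^n)^p$, hence Lebesgue null. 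Thus linear independence holds with probability one; on that full-measure event the two displayed events coincide, and discarding a null set preserves probabilities, which gives \eqref{eq:noMLE}.

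The only step needing care is this almost-sure linear independence, and within it the treatment of $\bV$: unlike the Gaussian columns it is neither Gaussian nor independent of $\bY$, so one must first condition on $\bY$, observe that $V_i\mid Y_i$ still admits a density, and only then invoke the standard fact that finitely many independent absolutely continuous vectors together with one fixed nonzero vector are in general linear position as soon as their number does not exceed the ambient dimension. Everything else --- the reductions, the passage to $\bM$, the rewriting of Albert--Anderson in matrix form, and the null-set bookkeeping --- is routine.
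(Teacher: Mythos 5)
Your proof is correct and follows essentially the same route as the paper: both identify the only discrepancy between the Albert--Anderson event and the span event as the degenerate case $\bM\bm{c}=\bzero$ with $\bm{c}\neq\bzero$, and dispose of it as a probability-zero linear dependence among the columns. If anything, your conditioning-on-$\bY$ argument for almost-sure linear independence of all $p+1$ columns is slightly more complete than the paper's one-line appeal to the Gaussian span $\lspan(\bX_2,\ldots,\bX_p)$ meeting the fixed two-dimensional subspace $\lspan(\bY,\bV)$ trivially, since it also covers a putative dependence between $\bY$ and $\bV$ alone.
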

\begin{proof}
  We have seen that there is no MLE if there exists
  $(b_0, b_1, \ldots, b_p) \neq \bzero$ such that
\begin{equation}
  \label{eq:noMLEa}
  b_0 \bY + b_1 {\bm V} + b_2 \bX_2 + \ldots + b_p \bX_p  \ge \bzero.
\end{equation}
By \eqref{eq:joint}, this says that the chance there is no MLE is the
chance of the event \eqref{eq:noMLEa}.  Under our assumptions, the
probability that the $(p-1)$ dimensional subspace spanned by
$\bX_2, \ldots, \bX_p$ non-trivially intersects a fixed subspace of
dimension 2 is zero. Since $(\bY, {\bm V})$ and
$(\bX_2, \ldots, \bX_p)$ are independent, this means that we have
equality in \eqref{eq:noMLEa} with probability zero.
\end{proof}

\subsection{Convex cones}

We are interested in rewriting \eqref{eq:noMLE} in a slightly
different form.  For a fixed subspace ${\cal W} \subset \R^n$,
introduce the convex cone
\begin{equation}
  \label{eq:cone}
  \mathcal{C}({\cal W}) = \{ \bw + \bu: \bw \in {\cal W}, \bu \ge \bzero\}. 
\end{equation}
This is a polyhedral cone, which shall play a crucial role in our
analysis. As we will see, the MLE does not exist if 
$\lspan(\bX_2, \ldots, \bX_p)$ intersects the cone
$\mathcal{C}(\lspan(\bY, \bV))$ in a non-trivial way.


\begin{proposition}
  \label{prop:mletwo} Set ${\cal L} = \lspan(\bX_2, \ldots, \bX_p)$
  and ${\cal W} = \lspan(\bY, \bV)$. Let $\{\text{\em No MLE Single}\}$
  be the event that we can either completely or quasi-separate the
  data points by using the intercept and the first variable only:
  i.e.~${\cal W} \cap \R^n_+ \neq \{\bzero\}$. We have
   \begin{equation}
  \label{eq:noMLEtwo}
  \P\{\text{\em no MLE}\} = \P\{{\cal L} \cap \mathcal{C}({\cal W})
  \neq \{\bzero\} \,\, \text{\em and} \,\, \{\text{\em No MLE
    Single}\}^c\} + \P\{\text{\em No MLE Single}\}.
\end{equation}
An immediate consequence is this:
  \begin{equation}
  \label{eq:noMLEthree}
 0 \le \P\{\text{\em no MLE}\} - \P\{{\cal L} \cap \mathcal{C}({\cal W}) \neq \{\bzero\}\} \le \P\{\text{\em No MLE Single}\}.
\end{equation}
\end{proposition}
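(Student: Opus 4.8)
The plan is to start from Proposition~\ref{prop:mle}: with $\mathcal{S} := \lspan(\bY, \bV, \bX_2, \ldots, \bX_p) = \mathcal{W} + \mathcal{L}$, equation \eqref{eq:noMLE} reads $\P\{\text{no MLE}\} = \P\{\mathcal{S} \cap \R^n_+ \neq \{\bzero\}\}$. The strategy is to split this event according to whether $\{\text{No MLE Single}\} = \{\mathcal{W} \cap \R^n_+ \neq \{\bzero\}\}$ occurs. First note the trivial inclusion $\{\text{No MLE Single}\} \subseteq \{\text{no MLE}\}$, since $\mathcal{W} \subseteq \mathcal{S}$ and hence $\mathcal{W} \cap \R^n_+ \subseteq \mathcal{S} \cap \R^n_+$. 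Therefore $\{\text{no MLE}\}$ is the disjoint union of $\{\text{No MLE Single}\}$ and $\{\text{no MLE}\} \cap \{\text{No MLE Single}\}^c$, and \eqref{eq:noMLEtwo} reduces to the claim that, up to a $\P$-null set, $\{\text{no MLE}\} \cap \{\text{No MLE Single}\}^c = \{\mathcal{L} \cap \mathcal{C}(\mathcal{W}) \neq \{\bzero\}\} \cap \{\text{No MLE Single}\}^c$.

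The heart is an elementary identity about the cone \eqref{eq:cone}. Since $\mathcal{W}$ is a linear subspace we have $\mathcal{C}(\mathcal{W}) = \mathcal{W} + \R^n_+ = \R^n_+ + \mathcal{W}$ and $-\mathcal{W} = \mathcal{W}$. If $\bv \in \mathcal{S} \cap \R^n_+$ and $\bv \neq \bzero$, decompose $\bv = \bw + \bl$ with $\bw \in \mathcal{W}$, $\bl \in \mathcal{L}$; then $\bl = \bv - \bw \in \R^n_+ + \mathcal{W} = \mathcal{C}(\mathcal{W})$. Conversely, if $\bl \in \mathcal{L} \cap \mathcal{C}(\mathcal{W})$ and $\bl \neq \bzero$, write $\bl = \bw + \bu$ with $\bw \in \mathcal{W}$, $\bu \geq \bzero$; then $\bu = \bl - \bw \in \mathcal{W} + \mathcal{L} = \mathcal{S}$ and $\bu \geq \bzero$. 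In each direction the remaining point is that the translated vector ($\bl$, resp.\ $\bu$) is nonzero. For the second direction, the same genericity fact used in the proof of Proposition~\ref{prop:mle} --- $\mathcal{L}$ is a.s.\ $(p-1)$-dimensional, $\mathcal{W}$ is a.s.\ $2$-dimensional and independent of $\mathcal{L}$, and $p < n-1$ gives $\dim\mathcal{L} + \dim\mathcal{W} = p+1 \le n$ --- shows $\mathcal{L} \cap \mathcal{W} = \{\bzero\}$ almost surely, so a nonzero $\bl \in \mathcal{L}$ is a.s.\ not in $\mathcal{W}$, forcing $\bu = \bl - \bw \neq \bzero$ and hence $\mathcal{S} \cap \R^n_+ \neq \{\bzero\}$. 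For the first direction, restricted to $\{\text{No MLE Single}\}^c$: if $\bl = \bzero$ then $\bv = \bw \in \mathcal{W} \cap \R^n_+ \setminus \{\bzero\}$, i.e.\ $\{\text{No MLE Single}\}$ would hold, a contradiction; hence $\bl \neq \bzero$ and $\mathcal{L} \cap \mathcal{C}(\mathcal{W}) \neq \{\bzero\}$. Combining these two implications with the disjoint decomposition above yields \eqref{eq:noMLEtwo}.

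Finally, \eqref{eq:noMLEthree} follows from \eqref{eq:noMLEtwo} by pure set arithmetic. Writing $A := \{\text{No MLE Single}\}$ and $B := \{\mathcal{L} \cap \mathcal{C}(\mathcal{W}) \neq \{\bzero\}\}$, \eqref{eq:noMLEtwo} says $\P\{\text{no MLE}\} = \P(B \cap A^c) + \P(A) = \P(B) - \P(B \cap A) + \P(A) = \P(B) + \P(A \setminus B)$, so $\P\{\text{no MLE}\} - \P(B) = \P(A \setminus B) \in [0, \P(A)]$. I do not expect a genuine obstacle here: the argument is bookkeeping plus the one genericity fact already established in Proposition~\ref{prop:mle}. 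The only place to stay disciplined is to resist the tempting but false shortcut $\{\text{No MLE Single}\} \subseteq \{\mathcal{L} \cap \mathcal{C}(\mathcal{W}) \neq \{\bzero\}\}$ --- a proper polyhedral cone $\mathcal{C}(\mathcal{W})$ need not be met nontrivially by the fixed subspace $\mathcal{L}$ --- which is exactly why \eqref{eq:noMLEthree} is stated as a two-sided bound rather than an equality.
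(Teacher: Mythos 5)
Your proof is correct and follows essentially the same route as the paper: split on whether $\{\text{No MLE Single}\}$ occurs, and on its complement translate the existence of a nonzero separating direction $b_0\bY+b_1\bV+\sum_{j\ge 2}b_j\bX_j=\bu\ge\bzero$ into a nonzero element of $\mathcal{L}\cap\mathcal{C}(\mathcal{W})$. You are somewhat more careful than the paper's terse proof in spelling out the converse implication (using $\mathcal{L}\cap\mathcal{W}=\{\bzero\}$ a.s.\ to guarantee $\bu\neq\bzero$), but the underlying argument is the same.
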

\begin{proof}
  If $\{\text{No MLE Single}\}$ occurs, the data is separable and
  there is no MLE. Assume, therefore, that $\{\text{No MLE Single}\}$
  does not occur. We know from Proposition \ref{prop:mle} that we do
  not have an MLE if and only if we can find a nonzero vector
  $(b_0, b_1, \ldots b_p)$ such that
  \[
    b_0 \bY + b_1 \bV + b_2 \bX_2 + \ldots + b_p \bX_p = \bu, \quad \bu
    \ge \bzero, \, \bu \neq \bzero. 
  \]
  By assumption, $b_0 \bY + b_1\bV = \bu$ cannot hold. Therefore,
  $ b_2 \bX_2 + \ldots + b_p \bX_p$ is a non-zero element of
  $\mathcal{C}({\cal W})$. This gives \eqref{eq:noMLEtwo} from which
  \eqref{eq:noMLEthree} easily follows.
\end{proof}

We have thus reduced matters to checking whether ${\cal L}$ intersects
${\cal C}({\cal W})$ in a non-trivial way. This is because we know
that under our model assumptions, the chance that we can separate the
data via a univariate model is exponentially decaying in $n$; that is,
the chance that there is $(b_0, b_1) \neq 0$ such that
$y_i(b_0 + b_1 x_{i1}) \ge 0$ for all $i$ is exponentially small. We
state this formally below. 
\begin{lemma}
 \label{lem:classical} 
 In the setting of Theorem \ref{thm:main}, the event
 $\{\text{\em No MLE Single}\}$ occurs with exponentially small
 probability.
\end{lemma}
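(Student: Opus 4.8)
The plan is to reduce the event $\{\text{No MLE Single}\}$ to a low-dimensional separation question and then invoke a quantitative version of Cover's counting argument. First I would note that $\{\text{No MLE Single}\}$ is exactly the event that the two-dimensional univariate logistic problem (intercept plus first coordinate) is separable or quasi-separable, i.e.~that there is $(b_0,b_1)\neq(0,0)$ with $y_i(b_0+b_1 x_{i1})\ge 0$ for all $i$. Equivalently, by Proposition \ref{prop:mletwo}'s notation, $\mathcal{W}\cap\R^n_+\neq\{\bzero\}$ where $\mathcal{W}=\lspan(\bY,\bV)$. Since $\bV = \bY\odot\bX_{\cdot 1}$ coordinatewise, the condition $y_i(b_0+b_1 x_{i1})\ge 0$ says precisely that the $n$ points $(y_i, v_i)=(y_i, y_i x_{i1})\in\R^2$ all lie in a fixed closed halfplane through the origin. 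So the event is that $n$ i.i.d.~points in $\R^2$ can be weakly separated from the origin by a line through the origin — equivalently, that these $n$ points do not positively span $\R^2$.

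The key step is a union bound over the $O(n)$ combinatorially distinct halfplanes. The points $(y_i,v_i)$, projected onto directions, partition the circle of directions into at most $2n$ arcs determined by the $n$ points (each point rules out an open halfcircle of "bad" directions, and the relevant boundary directions are the normals to the $n$ points, giving at most $2n$ candidate separating directions to check up to sign). For a single fixed direction $\bm{u}\in\R^2$, the probability that all $n$ points satisfy $\langle (y_i,v_i),\bm u\rangle \ge 0$ is $q_{\bm u}^n$ where $q_{\bm u}=\P(\langle (Y,V),\bm u\rangle\ge 0)<1$; and because $(Y,V)\sim F_{\beta_0,\gamma_0}$ has a distribution that is not concentrated on any line through the origin (indeed $V$ has a Gaussian-like continuous component conditionally on $Y$, and $Y$ takes both signs with positive probability unless $\gamma=\infty$, which is excluded), one has $\sup_{\bm u} q_{\bm u} \le q < 1$ for an absolute constant $q$ depending only on $(\beta_0,\gamma_0)$. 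Combining, $\P\{\text{No MLE Single}\}\le 2n\cdot q^{\,n} + \P(\text{ties})$, and a continuity argument kills the tie term. The only care needed is to make the discretization rigorous: rather than conditioning on the random arcs, one can fix a deterministic $\varepsilon$-net of $N=O(1/\varepsilon)$ directions on the circle, bound $\P(\exists \text{ net direction with all points in its }\varepsilon\text{-thickened halfplane})\le N \sup_{\bm u}\P(\langle(Y,V),\bm u\rangle\ge -\varepsilon)^n$, then let $\varepsilon$ shrink slowly with $n$ (say $\varepsilon_n = n^{-1}$) using the fact that $\bm u\mapsto \P(\langle (Y,V),\bm u\rangle \ge -\varepsilon)$ is continuous and strictly below $1$ uniformly for $\varepsilon$ small.

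The main obstacle is establishing the uniform bound $\sup_{\bm u}\P(\langle (Y,V),\bm u\rangle\ge -\varepsilon)\le q<1$ for all small $\varepsilon$, i.e.~ruling out a direction along which $(Y,V)$ is almost one-sided. This requires using the specific structure of $F_{\beta_0,\gamma_0}$: write $\bm u=(\cos\phi,\sin\phi)$ and compute $\P(Y\cos\phi + YX\sin\phi\ge 0)=\P(Y=1)\P(\cos\phi+X\sin\phi\ge 0)+\P(Y=-1)\P(-\cos\phi-X\sin\phi\ge 0\mid Y=-1)$; since $X\mid Y$ has a density supported on all of $\R$ with a strictly positive value near any half-line's boundary, each conditional probability is bounded away from $1$, uniformly in $\phi$, and the two events are genuinely different unless $\sin\phi=0$, which is itself ruled out because $Y$ is not a.s.~constant. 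A short compactness argument over $\phi\in[0,2\pi)$ then delivers the constant $q$. (For completeness one should note the edge cases: $\gamma_0=0$ makes $V$ an unconditional standard Gaussian and $Y$ an independent non-degenerate sign, which is the easiest case; $\beta_0=0$ makes $Y$ symmetric, again easy; the extreme-signal limits $\gamma\to\infty$ are excluded from the theorem's hypotheses.) Everything else — the union bound, the $\varepsilon$-net, the exponential decay — is routine once this uniform one-sidedness bound is in hand.
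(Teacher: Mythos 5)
Your proposal is correct in substance but takes a genuinely different route from the paper. You recast $\{\text{No MLE Single}\}$ as the event that the $n$ planar points $(y_i,v_i)$ all lie in a common closed halfplane through the origin, establish a uniform bound $\sup_{\bm u}\mathbb{P}\{\langle (Y,V),\bm u\rangle\ge 0\}\le q<1$, and then discretize the circle of directions. This works, and the uniform bound can be obtained more cleanly than your case analysis in $\phi$: for every unit $\bm u=(u_1,u_2)$ and almost every $X$, one of the two values of $Y$ makes $Y(u_1+u_2X)<0$, so $\mathbb{P}\{\langle(Y,V),\bm u\rangle<0\}\ge \E\min\{\sigma(\beta_0+\gamma_0X),\,1-\sigma(\beta_0+\gamma_0X)\}>0$, a constant depending only on $(\beta_0,\gamma_0)$. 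The paper's (sketched) argument avoids any union bound or net: fix a single threshold $t_0$; with probability $1-e^{-cn}$ both labels appear among the observations with $x_{i1}<t_0$ and both appear among those with $x_{i1}>t_0$, and on this single fixed event no $(b_0,b_1)\neq \bm 0$ can weakly separate, since any candidate threshold leaves two oppositely labelled points on the same side. That route is shorter; yours is more systematic and would generalize beyond one covariate. Two technical points in your write-up deserve attention: (i) in the $\varepsilon$-net step the thickening must scale with the norm of the point, i.e.\ you need $\langle(y_i,v_i),\bm u'\rangle\ge -\varepsilon\|(y_i,v_i)\|$ rather than $\ge-\varepsilon$, since $|v_i|$ is unbounded (harmless, but it must be stated, and the uniform bound must then be proved for the correspondingly thickened halfplanes); (ii) the net can be dispensed with entirely by unioning over the $2n$ closed halfplanes whose boundary contains one of the points and, for each $i$, conditioning on $(y_i,v_i)$ alone, which gives $\mathbb{P}\{\text{No MLE Single}\}\le 2n\,q^{\,n-1}$ directly by independence of the remaining $n-1$ samples.
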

\begin{proof}
  We only sketch the argument. We are in a univariate model with
  $\P(y_i = 1 | x_i) = \sigma(\beta_0 + \gamma_0 x_i)$ and $x_i$
  i.i.d.~$\mathcal{N}(0,1)$.  Fix $t_0 \in \R$.  Then it is easy to
  see that the chance that $t_0$ separates the $x_i$'s is
  exponentially small in $n$. However, when the complement occurs, the
  data points overlap and no separation is possible. 
\end{proof}

\subsection{Proof of Theorem \ref{thm:main}}
\label{sec:proofthm}

\newcommand{\bg}{\bm{g}}

To prove our main result, we need to understand when a random subspace
${\cal L}$ with uniform orientation intersects
${\cal C}(\lspan(\bY,\bV))$ in a nontrivial way.  For a fixed subspace
${\cal W} \subset \R^n$, the approximate kinematic formula
\cite[Theorem I]{amelunxen2014living} from the literature on convex
geometry tells us that for any $\epsilon \in (0,1)$
\begin{equation}
  \label{eq:kinematic}
  \begin{array}{lll} 
    p - 1+  \delta(\mathcal{C}({\cal W})) > n + a_\epsilon \sqrt{n} & \Longrightarrow & 
                                                                            \P\{{\cal L} \cap \mathcal{C}({\cal W}) \neq \{\bzero\}\} \ge 1 - \epsilon\\
    p - 1 +  \delta(\mathcal{C}({\cal W})) < n - a_\epsilon \sqrt{n} & \Longrightarrow &
    \P\{{\cal L} \cap \mathcal{C}({\cal W}) \neq \{\bzero\}\} \le \epsilon.
  \end{array}
\end{equation}
We can take $a_\epsilon = \sqrt{8\log(4/\epsilon)}$.
Above, $\delta(\mathcal{C})$ is the {\em statistical dimension} of a
convex cone $\mathcal{C}$ defined as
\begin{equation}
  \label{eq:statistical-dim}
  \delta(\mathcal{C}) := \E \|\Pi_{\mathcal{C}} (\bZ)\|^2 = n - \E \|\bZ - \Pi_{\mathcal{C}} (\bZ) \|^2, \quad \bZ \sim \dnorm(\bzero, {\bm I}_n),
\end{equation}
where $\Pi_{\mathcal{C}}$ is the projection onto $\mathcal{C}$.

We develop a formula for the statistical dimension of the cone
$\mathcal{C}({\cal W})$ of interest to us. 
\begin{lemma}
  \label{lem:statdim}
  Fix ${\cal W} \subset \R^n$. Then with $\bZ$ distributed as in
  \eqref{eq:statistical-dim},
  \begin{equation}
    \label{eq:statdim}
    \delta(\mathcal{C}({\cal W})) =  n - \E \left\{ \min_{\bw \in {\cal W}} \, \|(\bw - \bZ)_+\|^2\right\}.
  \end{equation}
\end{lemma}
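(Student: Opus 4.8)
The plan is to compute $\delta(\mathcal{C}(\mathcal{W}))$ directly from the second expression in \eqref{eq:statistical-dim}, namely $\delta(\mathcal{C}(\mathcal{W})) = n - \E\|\bZ - \Pi_{\mathcal{C}(\mathcal{W})}(\bZ)\|^2$, by identifying the squared distance from $\bZ$ to the cone with the quantity $\min_{\bw \in \mathcal{W}}\|(\bw - \bZ)_+\|^2$. So the whole lemma reduces to the deterministic identity
\[
  \operatorname{dist}(\bz, \mathcal{C}(\mathcal{W}))^2 = \min_{\bw \in \mathcal{W}} \|(\bw - \bz)_+\|^2 \qquad \text{for every } \bz \in \R^n,
\]
after which one takes expectations over $\bZ \sim \dnorm(\bzero, \bm{I}_n)$ and invokes \eqref{eq:statistical-dim}; measurability of the integrand is routine since the inner minimum is attained (the map $\bw \mapsto \|(\bw-\bz)_+\|^2$ is continuous and coercive on the finite-dimensional subspace $\mathcal{W}$).

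To prove the deterministic identity, recall $\mathcal{C}(\mathcal{W}) = \{\bw + \bu : \bw \in \mathcal{W}, \, \bu \ge \bzero\} = \mathcal{W} + \R^n_+$. Then
\[
  \operatorname{dist}(\bz, \mathcal{C}(\mathcal{W}))^2 = \min_{\bw \in \mathcal{W}} \, \min_{\bu \ge \bzero} \|\bz - \bw - \bu\|^2 = \min_{\bw \in \mathcal{W}} \operatorname{dist}(\bz - \bw, \R^n_+)^2.
\]
Now for any fixed vector $\bv \in \R^n$, the Euclidean projection onto the nonnegative orthant acts coordinatewise as $\Pi_{\R^n_+}(\bv) = \bv_+$, so $\operatorname{dist}(\bv, \R^n_+)^2 = \|\bv - \bv_+\|^2 = \|(-\bv)_+\|^2 = \|\bv_-\|^2$, the squared norm of the negative part. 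Applying this with $\bv = \bz - \bw$ gives $\operatorname{dist}(\bz - \bw, \R^n_+)^2 = \|(\bw - \bz)_+\|^2$, and substituting yields exactly $\min_{\bw \in \mathcal{W}} \|(\bw - \bz)_+\|^2$, as claimed. The one point requiring a word of care is that the double minimum can be interchanged and the inner infimum over $\bu \ge \bzero$ is attained at $\bu = (\bz - \bw)_+$; this is immediate since $\R^n_+$ is closed and convex.

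The main obstacle is essentially bookkeeping rather than depth: one must make sure the cone $\mathcal{C}(\mathcal{W})$ really is closed (so that projection is well-defined and the distance is attained), which holds because $\mathcal{W}$ is a subspace and $\R^n_+$ is a closed polyhedral cone, so their Minkowski sum is a closed polyhedral cone. Given that, the argument is a short chain of elementary facts about projection onto the orthant. I would therefore present it as: (i) expand $\mathcal{C}(\mathcal{W}) = \mathcal{W} + \R^n_+$ and split the distance into an outer minimization over $\mathcal{W}$ and an inner projection onto $\R^n_+$; (ii) evaluate the orthant projection coordinatewise to get the negative-part formula; (iii) combine and take expectations, using the second form of \eqref{eq:statistical-dim}.
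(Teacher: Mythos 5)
Your proof is correct and follows essentially the same route as the paper: reduce the statistical dimension to the expected squared distance to the cone, split the minimization over $\mathcal{W} + \R^n_+$ into an outer minimum over $\bw$ and an inner projection onto the orthant, and evaluate the latter coordinatewise to get $\|(\bw - \bz)_+\|^2$. The extra remarks on closedness of the polyhedral cone and measurability are fine but not needed beyond what the paper records.
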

\begin{proof}
By definition,
$\delta(\mathcal{C}({\cal W})) = n - \E \operatorname{dist}^2(\bZ,
\mathcal{C}({\cal W}))$, where for a fixed $\bz \in \R^n$,
$\text{dist}^2(\bz, \mathcal{C}({\cal W}))$ is the optimal value of
the quadratic program
\begin{equation*}
  \label{eq:P}
  \begin{array}{ll}
    \text{minimize} & \quad \|\bz - \bw - \bu\|^2\\
    \text{subject to} & \quad \bw \in {\cal W}\\
    & \quad \bu \ge \bzero.
  \end{array}
\end{equation*}
For any $\bw \in {\cal W}$, the optimal value of $\bu$ is given by
$(\bz - \bw)_+$. Hence, the optimal value of the program is
\[
  \operatorname{min}_{\bw \in {\cal W}} \,\, \|\bz - \bw - (\bz -
  \bw)_+\|^2 =  \operatorname{min}_{\bw \in {\cal W}}  \|(\bw - \bz)_+\|^2.
\]
\end{proof}

We claim that this lemma combined with the theorem below establish
Theorem \ref{thm:main}.
\begin{theorem}
  \label{thm:minmin}
  Let $(\bY, \bV)$ be $n$ i.i.d.~samples from 
  $F_{\beta_0,\gamma_0}$.  The random variable
\begin{equation*}
  \label{eq:Q}
  Q_n := \min_{t_0, t_1 \in \R} \, \frac{1}{n} \|(t_0 \bY + t_1 \bV - \bZ)_+\|^2
\end{equation*}
 obeys 
\begin{equation}
  \label{eq:minmin}
  Q_n\, \convP \, h_{\text{\em MLE}}(\beta_0,\gamma_0) = \min_{t_0,t_1} \, \left\{ \E  \, (t_0 Y + t_1 V - Z)_+^2\right\}. 
\end{equation}
In fact, we establish the stronger
statement $Q_n = h_{\text{\em MLE}}(\beta_0,\gamma_0) + O_P(n^{-1/2})$.
\end{theorem}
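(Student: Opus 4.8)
\emph{Proof proposal.} The plan is to exploit convexity. Writing $\phi_{t_0,t_1}(y,v,z):=(t_0y+t_1v-z)_+^2$, we have $Q_n=\min_{t_0,t_1}f_n(t_0,t_1)$ with $f_n(t_0,t_1):=\tfrac1n\sum_{i=1}^n\phi_{t_0,t_1}(Y_i,V_i,Z_i)$, and $h_{\text{MLE}}(\beta_0,\gamma_0)=\min_{t_0,t_1}f(t_0,t_1)$ with $f(t_0,t_1):=\E\,\phi_{t_0,t_1}(Y,V,Z)$. Since $x\mapsto(x)_+^2$ is convex and nondecreasing and is composed here with an affine function, both $f_n$ and $f$ are convex in $(t_0,t_1)$; moreover $f$ is finite everywhere ($Y,V,Z$ have finite second moments) and hence continuous. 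The steps are: (i) show $f$ is coercive, so that its minimizers lie in a fixed ball $B_R=\{\|t\|\le R\}$; (ii) bound $\sup_{t\in B_R}|f_n(t)-f(t)|$; and (iii) use convexity of $f_n$ to show that, with probability tending to one, $Q_n=\min_{t\in B_R}f_n(t)$, which then controls $Q_n-h_{\text{MLE}}$ by $\sup_{B_R}|f_n-f|$.

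For (i), the only nontrivial point is that $p_\delta:=\inf_{\|(u_0,u_1)\|=1}\prob(u_0Y+u_1V>\delta)$ is strictly positive for some $\delta\in(0,1)$. This holds because, by \eqref{eq:V}, $Y$ takes each value $\pm1$ with positive probability while $V$ has a density strictly positive on all of $\R$; hence $u_0Y+u_1V$ puts positive mass on $(\delta,\infty)$ for every unit vector $(u_0,u_1)$, and $(u_0,u_1)\mapsto\prob(u_0Y+u_1V>\delta)$ is continuous on the compact unit circle for $\delta\in(0,1)$, so its infimum is attained and positive. Then, writing $(t_0,t_1)=r(u_0,u_1)$ and using $(a)_+^2\ge (r\delta/2)^2\,\One{a\ge r\delta/2}$ together with $\{u_0Y+u_1V\ge\delta\}\cap\{Z\le r\delta/2\}\subseteq\{r(u_0Y+u_1V)-Z\ge r\delta/2\}$,
\[
f(ru_0,ru_1)\ \ge\ \frac{r^2\delta^2}{4}\,\prob\!\left(u_0Y+u_1V\ge\delta,\ Z\le r\delta/2\right)\ \ge\ \frac{r^2\delta^2}{4}\Big(p_\delta-\prob(Z>r\delta/2)\Big)\ \longrightarrow\ \infty
\]
as $r\to\infty$, uniformly over the unit circle. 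Hence $f$ attains its minimum on a compact set, and we may fix $R<\infty$ such that all minimizers of $f$ have norm $<R$ and $c_R:=\min_{\|t\|=R}f(t)-h_{\text{MLE}}>0$.

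For (ii), the class $\{\phi_{t_0,t_1}:(t_0,t_1)\in B_R\}$ has the square-integrable envelope $F(y,v,z)\le\big(R\sqrt{y^2+v^2}+|z|\big)^2$ and is Lipschitz in the parameter, $|\phi_s-\phi_t|\le 2\big(R(|Y|+|V|)+|Z|\big)\big(|Y|+|V|\big)\,\|s-t\|$ with an $L^2$-integrable Lipschitz factor; it is therefore Donsker, and the bracketing central limit theorem gives $\sup_{t\in B_R}|f_n(t)-f(t)|=O_P(n^{-1/2})$. In particular, at a fixed minimizer $t^\star$ of $f$, the ordinary CLT gives $f_n(t^\star)=f(t^\star)+O_P(n^{-1/2})=h_{\text{MLE}}+O_P(n^{-1/2})$. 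For (iii), on the event $E_n:=\{\sup_{B_R}|f_n-f|\le c_R/4\}$, whose probability tends to $1$, we have $f_n\ge h_{\text{MLE}}+\tfrac34 c_R$ on $\partial B_R$ while $f_n(t^\star)\le h_{\text{MLE}}+\tfrac14 c_R$; since $f_n$ is convex, any minimizer of $f_n$ over $B_R$ then lies in the interior of $B_R$ and is a global minimizer of $f_n$, so $Q_n=\min_{B_R}f_n$ and $|Q_n-h_{\text{MLE}}|=|\min_{B_R}f_n-\min_{B_R}f|\le\sup_{B_R}|f_n-f|$. Combining with (ii) and $\prob(E_n)\to1$ yields $Q_n=h_{\text{MLE}}+O_P(n^{-1/2})$, and in particular $Q_n\convP h_{\text{MLE}}$.

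The crux of the argument is the interplay between steps (i) and (iii): since the minimization defining $Q_n$ is over the unbounded set $\R^2$, one must rule out the empirical minimizer drifting off to infinity, and this is precisely what coercivity of the population objective $f$ and convexity of the empirical objective $f_n$ achieve together. Everything else --- the pointwise CLT at the population optimum and the uniform $\sqrt n$-rate over the fixed ball $B_R$ --- is routine empirical-process bookkeeping.
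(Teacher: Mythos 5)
Your argument is correct, but it takes a genuinely different route from the paper's. The paper works locally around the population minimizer $\blambda_0$ of $f(\blambda)=\E\,(\lambda_0 Y+\lambda_1 V-Z)_+^2$: it computes the Hessian of $f$ explicitly, establishes the two-sided bound $\alpha_0 {\bm I}_2\preceq\nabla^2 f(\blambda)\preceq\alpha_1{\bm I}_2$ uniformly in $\blambda$ (strong convexity, which in particular subsumes your coercivity step), uses pointwise Bernstein-type sub-exponential concentration of $F$ and $\nabla F$ at $\blambda_0$ and on a discretized circle of radius $xn^{-1/4}$ to prove the localization $\|\blambda_\star-\blambda_0\|=O_P(n^{-1/4})$ (a Chatterjee-style lemma), and then converts this into $F(\blambda_\star)=f(\blambda_0)+O_P(n^{-1/2})$ via a first-order convexity sandwich using the Lipschitz gradient. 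You instead run the classical M-estimation/epi-convergence argument: coercivity of the population objective confines its minimizers to a fixed ball $B_R$, a bracketing/Donsker bound gives $\sup_{B_R}|f_n-f|=O_P(n^{-1/2})$, and convexity of $f_n$ prevents the empirical minimizer from escaping $B_R$, whence $|Q_n-\hmle|\le\sup_{B_R}|f_n-f|$. Your route is shorter and avoids both the Hessian computation and the shrinking-circle net, at the price of invoking an off-the-shelf empirical-process theorem where the paper stays self-contained with elementary concentration plus a finite discretization; the paper's argument also delivers, as a by-product, the rate $\|\blambda_\star-\blambda_0\|=O_P(n^{-1/4})$ for the location of the minimizer, which your proof does not (and need not) address. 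One small point to tighten in your step (i): $Y$ and $V$ are not independent in general, so ``$Y$ takes both signs with positive probability and $V$ has an everywhere-positive density'' does not by itself yield $\prob(u_0Y+u_1V>\delta)>0$; but writing $u_0Y+u_1V=Y(u_0+u_1X)$ and conditioning on $\{Y=1\}$, which has positive probability given any value of $X$, gives the claim immediately, so $p_\delta>0$ and the coercivity bound stand.
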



\newcommand{\filt}{{\cal F}}

Below, we let $\filt$ be the $\sigma$-algebra generated by $\bY$ and
$\bV$.  Set $\epsilon_n = n^{-\alpha}$ for some positive $\alpha$,
$a_n = \sqrt{8\alpha\log(4n)}$, and
define the events
\[
A_n = \{p/n > \E\{Q_n | \filt\} + a_n n^{-1/2} \}, \quad E_n = \{{\cal L} \cap \mathcal{C}({\cal W}) \neq \{\bzero\}\}.
\]

We first show that if $\kappa > h_{\text{MLE}}(\beta_0,\gamma_0)$,
then $\P\{\text{no MLE}\} \goto 1$ or, equivalently,
$\P\{E_n\} \goto 1$.  Our geometric arguments \eqref{eq:kinematic}
tell us that if $A_n$ occurs, then
$\P\{E_n \, | \, \filt\} \ge 1 - \epsilon_n$. This means that
\[
\One{A_n} \le \One{\P\{E_n \, | \, \filt\} \ge 1-\epsilon_n} \le  \P\{E_n \, | \, \filt\} + \epsilon_n. 
\]
Taking expectation gives 
\[
\P\{E_n\} \ge \P\{A_n\} - \epsilon_n. 
\]
Next we claim that 
\begin{equation}
  \label{eq:claim}
  \E\{Q_n | \filt\} \, \convP \, h_{\text{MLE}}(\beta_0,\gamma_0). 
\end{equation}
This concludes the proof since
\eqref{eq:claim} implies that
$\P\{A_n\} \goto 1$ and, therefore, $\P\{E_n\} \goto 1$.  The argument
showing that if $\kappa < h_{\text{MLE}}(\beta_0,\gamma_0)$, then
$\P\{\text{no MLE}\} \goto 0$ is entirely similar and omitted.

It remains to justify \eqref{eq:claim}. Put
$h = \hmle(\beta_0,\gamma_0)$ for short (this is a non-random
quantity), and note that $Q_n - h$ is uniformly integrable (this is
because $Q_n$ is the minimum of an average of $n$
i.i.d.~sub-exponential variables). Hence, if $Q_n$ converges in
probability, it also converges in mean in the sense that
$\E |Q_n - h| \goto 0$. Since
\[
\left| \E \{Q_n | \filt\} - h \right| \le \E \{|Q_n-h| \, | \, \filt\}, 
\]
we see that taking expectation on both sides yields that
$\E \{Q_n | \filt\}$ converges to $h$ in mean and, therefore, in
probability (since convergence in means implies convergence in
probability).


\newcommand{\blambda}{\bm{\lambda}}


\section{Proof of Theorem \ref{thm:minmin}}
\label{sec:proof}

We begin by introducing some notation to streamline our exposition as
much as possible. Define the mapping
$J : \bx \mapsto  \|\bx_+\|^2/2$ and let $\bA$ be the $n \times 2$
matrix with $\by$ and $\bV$ as columns. Next, 
define the random function $F$ and its expectation $f$ as
\[
  F(\blambda) = n^{-1} \, J(\bA \blambda - \bZ), \quad f(\blambda) =
  \E F(\blambda).
\]
Both $F$ and $f$ are convex and it is not hard to see that $f$ is
strictly convex (we will see later that it is, in fact, strongly
convex).  Let $\blambda_\star$ be any minimizer of $F$
($\blambda_\star$ is a random variable) and $\blambda_0$ be the unique
minimizer of $f$ ($\blambda_0$ is not random and finite).  With this
notation, Theorem \ref{thm:minmin} asks us to prove that
\begin{equation}
  \label{eq:claim2}
  F(\blambda_\star) = f(\blambda_0) + O_P(n^{-1/2})  
\end{equation}
and in the rest of this section, we present the simplest argument we
could think of. 

We begin by recording some simple properties of $F$ and $f$. It
follows from $\nabla J(\bx) = \bx_+$ that $\nabla J$ is Lipschitz and
obeys
\[
\|\nabla J(\bx) - \nabla J(\bx_0)\| \le \|\bx - \bx_0\|.
\]
Consequently $F$ is also Lipschitz with constant at most
$n^{-1}\|\bA\|^2 \le n^{-1}(\|\by\|^2 + \|\bV\|^2) = 1 + n^{-1}\|\bV\|^2$.  It is
also a straightforward calculation to see that $f$ is twice
differentiable with Hessian given by
\[
\nabla^2 f(\blambda)  = n^{-1} \, \E \{\bA' {\bm D} \bA\}, \quad {\bm D} = \operatorname{diag}(\One{\bA \blambda - \bZ \ge \bzero}). 
\]
It follows that with $\blambda = (\lambda_0,\lambda_1)$, the Hessian is given by
\begin{equation}
  \nabla^2 f(\blambda) = \begin{bmatrix} \E \{Y^2 \Phi(\lambda_0 Y + \lambda_1 V)\}  
      & \E \{YV \Phi(\lambda_0 Y + \lambda_1 V)\} \\ \E \{YV \Phi(\lambda_0 Y + \lambda_1 
      V)\} & \E\{V^2 \Phi(\lambda_0 Y + \lambda_1 V)\}
  \end{bmatrix}, 
\end{equation}
where $(Y,V)$ is distributed as in Theorem \ref{thm:main} and $\Phi$
is the cdf of a standard normal. We claim that for fixed
$(\beta_0, \gamma_0)$, it holds that 
\begin{equation}
  \label{eq:claimHessian}
 {\alpha_0 {\bm I}_2 \preceq \nabla^2 f(\blambda) \preceq \alpha_1 {\bm I}_2,} 
\end{equation}
uniformly over $\blambda$, where $\alpha_0, \alpha_1$ are fixed
positive numerical constant (that may depend on
$(\beta_0, \gamma_0)$).

Next we claim that for a {\em fixed} $\blambda$, $F(\lambda)$ does not
deviate much from its expectation $f(\blambda)$. This is because
$F(\blambda)$ is an average of sub-exponential variables which are
i.i.d.~copies of $(\lambda_0 Y + \lambda_1 V - Z)_+^2$; classical
bounds \cite[Corollary 5.17]{vershynin2010introduction} give
\begin{equation}
  \label{eq:subexp1}
   \P\{|F(\blambda) - f(\blambda)| \ge t\} 
 \le
  2\exp \left( -c_0 n \, 
    \operatorname{min}\left(\frac{t^2}{c_1^2(1+\|\blambda\|^2)^2},
    \frac{t}{c_1(1+\|\blambda\|^2)}\right)\right),
\end{equation}
where $c_0, c_1$ are numerical constants.
Also, $\nabla F(\lambda)$ does not deviate much from its expectation
$\nabla f(\blambda)$ either because this is also an average of
sub-exponential variables. Hence, we also have
\begin{equation}
  \label{eq:subexp2}
   \P\{\|\nabla F(\blambda) - \nabla f(\blambda)\| \ge t\} 
 \le
  2\exp \left( -c_2 n \, 
    \operatorname{min}\left(\frac{t^2}{c_3^2(1+\|\blambda\|^2)^2},
    \frac{t}{c_3(1+\|\blambda\|^2)}\right)\right),
\end{equation}
where $c_2, c_3$ are numerical constants. In the sequel, we shall make
a repeated use of the inequalities
\eqref{eq:subexp1}--\eqref{eq:subexp2}.

With these preliminaries in place, we can turn to the proof of
\eqref{eq:claim2}.  On the one hand, the convexity of $F$ gives
\begin{equation}
  \label{eq:lower}
F(\blambda_\star) \ge F(\blambda_0) + \langle \nabla F(\blambda_0), \blambda_\star - \blambda_0\rangle. 
\end{equation}
On the other hand, since $\nabla F$ is Lipschitz, we have the upper
bound
\begin{equation}
  \label{eq:upper}
  F(\blambda_\star) \le F(\blambda_0) + \langle \nabla F(\blambda_0), \blambda_\star - \blambda_0\rangle + (1+ \|\bV\|^2/n) \|\blambda_\star - \blambda_0\|^2. 
\end{equation}
Now observe that \eqref{eq:subexp1} gives that 
$$ F(\blambda_0) = f(\blambda_0) + O_P(n^{-1/2}).$$
Also, since $\nabla f(\blambda_0) = \bzero$,  \eqref{eq:subexp2} gives 
\[
\|\nabla F(\blambda_0)\| = O_P(n^{-1/2}). 
\]
Finally, since
$\|\bV\|^2/n \convP \E V^2$, we see from \eqref{eq:lower} and
\eqref{eq:upper} that \eqref{eq:claim2} holds if  $\|\blambda_\star - \blambda_0\| = O_P(n^{-1/4})$.

\begin{lemma}
  \label{lem:sourav}
  We have $\|\blambda_\star - \blambda_0\| = O_P(n^{-1/4})$. 
\end{lemma}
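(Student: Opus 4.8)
The plan is to exploit the strong convexity of $f$ established in \eqref{eq:claimHessian} together with a uniform concentration of $F$ around $f$ on a bounded region, and then patch in a crude a priori bound to confine $\blambda_\star$ to that region. Since $f$ is $\alpha_0$-strongly convex and $\nabla f(\blambda_0) = \bzero$, we have
\[
f(\blambda_\star) \ge f(\blambda_0) + \frac{\alpha_0}{2} \|\blambda_\star - \blambda_0\|^2,
\]
so it suffices to show that $f(\blambda_\star) - f(\blambda_0) = O_P(n^{-1/2})$, since then $\|\blambda_\star - \blambda_0\|^2 = O_P(n^{-1/2})$, which is even stronger than the claimed $O_P(n^{-1/4})$. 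Now $f(\blambda_\star) - f(\blambda_0) = [f(\blambda_\star) - F(\blambda_\star)] + [F(\blambda_\star) - F(\blambda_0)] + [F(\blambda_0) - f(\blambda_0)]$. The middle bracket is $\le 0$ by definition of $\blambda_\star$ as a minimizer of $F$; the last bracket is $O_P(n^{-1/2})$ by \eqref{eq:subexp1}. So everything reduces to controlling $f(\blambda_\star) - F(\blambda_\star)$ from above, i.e.~to a \emph{one-sided uniform} deviation bound.

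**Next I would** establish the a priori localization of $\blambda_\star$. Because $f$ is $\alpha_0$-strongly convex with minimizer $\blambda_0$, the sublevel sets of $f$ are bounded; I want to argue $\|\blambda_\star\|$ stays bounded with high probability. One clean route: by strong convexity, for $\|\blambda\| = R$ large enough we have $f(\blambda) \ge f(\blambda_0) + \tfrac{\alpha_0}{4}R^2$, while $F(\blambda_0) \le f(\blambda_0) + O_P(n^{-1/2})$; if additionally $F(\blambda) \ge f(\blambda) - 1$ uniformly on $\|\blambda\| \le R$ (say with probability $\ge 1-\delta$), then for $R$ a large enough constant no minimizer of $F$ can lie at radius $R$ or beyond, forcing $\|\blambda_\star\| \le R$ with probability $\ge 1 - \delta - o(1)$. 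The uniform bound $\sup_{\|\blambda\|\le R} |F(\blambda) - f(\blambda)| = O_P(\text{something small})$ follows from a standard $\varepsilon$-net argument over the ball $\{\|\blambda\| \le R\}$: combine the pointwise sub-exponential tail \eqref{eq:subexp1} at the net points with the Lipschitz control on $F$ (constant $\le 1 + \|\bV\|^2/n = O_P(1)$) and on $f$ (Hessian bounded by $\alpha_1$) to pass from the net to the whole ball, yielding $\sup_{\|\blambda\| \le R}|F(\blambda) - f(\blambda)| = O_P(n^{-1/2})$ up to logarithmic factors — more than enough.

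**Putting it together:** on the high-probability event that $\|\blambda_\star\| \le R$, we get $f(\blambda_\star) - F(\blambda_\star) \le \sup_{\|\blambda\| \le R}|F(\blambda) - f(\blambda)| = O_P(n^{-1/2})$, hence $f(\blambda_\star) - f(\blambda_0) = O_P(n^{-1/2})$, hence by strong convexity $\|\blambda_\star - \blambda_0\| = O_P(n^{-1/4})$ — in fact $O_P(n^{-1/4})$ with room to spare, since we actually obtain the $O_P(n^{-1/2})$ rate for the squared distance. (This is consistent with the fact that \eqref{eq:claim2} only needs the $n^{-1/4}$ rate.)

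**The main obstacle** is the a priori boundedness of $\blambda_\star$: the sub-exponential tails in \eqref{eq:subexp1}--\eqref{eq:subexp2} degrade like $(1+\|\blambda\|^2)^2$, so a naive union bound over an unbounded parameter space is useless. The fix, as sketched, is to decouple the argument into (i) a qualitative localization step that pins $\blambda_\star$ inside a constant-radius ball using only a weak ($O_P(1)$) uniform bound plus the quadratic growth of $f$, and (ii) a quantitative step that applies the sharp $O_P(n^{-1/2})$ uniform bound only on that fixed ball, where the $\|\blambda\|$-dependence in the tails is harmless. One should double-check that $\blambda_0$ is genuinely finite — this is where \eqref{eq:claimHessian}, giving $\nabla^2 f \succeq \alpha_0 \Id_2$ uniformly, does the work, since it rules out $f$ being minimized at infinity — and that the Lipschitz constant $1 + \|\bV\|^2/n$ concentrates, which is immediate since $\|\bV\|^2/n \convP \E V^2 < \infty$.
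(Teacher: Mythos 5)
Your argument is sound in outline and takes a genuinely different route from the paper's. The paper never proves a uniform law of large numbers over a fixed ball: it works directly at the target scale, placing a \emph{constant} number $d$ of points on the circle of radius $xn^{-1/4}$ about $\blambda_0$, invoking the pointwise bounds \eqref{eq:subexp1}--\eqref{eq:subexp2} only at those points, and interpolating between them via the convexity inequality $F(\blambda)\ge F(\blambda_i)-\|\nabla F(\blambda_i)\|\,\|\blambda-\blambda_i\|$; the interpolation error is harmless because $\|\nabla F(\blambda_i)\|$ is itself of order $xn^{-1/4}$ near the minimizer of $f$ (by \eqref{eq:claimHessian} and $\nabla f(\blambda_0)=\bzero$). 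Convexity of $F$ then confines $\blambda_\star$ to the interior of the circle. Your route is the classical M-estimation one: localize $\blambda_\star$ to a constant-radius ball (your localization step is, in effect, the same convexity-confinement trick applied at radius $R$ instead of $xn^{-1/4}$), establish a uniform deviation bound on that ball, and convert the excess $f(\blambda_\star)-f(\blambda_0)$ into a distance via strong convexity. Both are valid; the paper's version buys the clean rate from finitely many pointwise concentration bounds, while yours requires a genuine uniform bound but is more modular and closer to textbook empirical-process arguments.

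The one point to tighten is quantitative. A naive $\varepsilon$-net over a constant-radius ball at accuracy $n^{-1/2}$ forces a mesh of order $n^{-1/2}$, hence order $n$ net points, and the union bound over \eqref{eq:subexp1} then yields $\sup_{\|\blambda\|\le R}|F(\blambda)-f(\blambda)|=O_P(\sqrt{\log n/n})$ rather than $O_P(n^{-1/2})$. This delivers $\|\blambda_\star-\blambda_0\|=O_P\bigl(n^{-1/4}(\log n)^{1/4}\bigr)$, which suffices for the convergence in probability in \eqref{eq:minmin} and hence for Theorem \ref{thm:main}, but falls just short of the literal statement of the lemma and of the $O_P(n^{-1/2})$ refinement asserted in Theorem \ref{thm:minmin}. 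The logarithm can be removed either by a chaining (Dudley) argument, whose entropy integral converges over a two-dimensional index set of constant diameter, or by adopting the paper's device of interpolating with the gradient on a shrinking circle so that only $O(1)$ net points are ever needed. Two minor slips worth noting: $\|\blambda_\star-\blambda_0\|^2=O_P(n^{-1/2})$ is \emph{equivalent to}, not stronger than, the claimed $\|\blambda_\star-\blambda_0\|=O_P(n^{-1/4})$; and your reduction to a one-sided bound on $f(\blambda_\star)-F(\blambda_\star)$ presupposes that a minimizer $\blambda_\star$ exists, which your own localization event supplies (on that event $F$ exceeds $F(\blambda_0)$ on the sphere of radius $R$, so the infimum is attained inside), so you should state that the conclusion is drawn on that event.
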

\begin{proof}
  The proof is inspired by an argument in \cite{chatterjee2014new}.
  For any $\blambda \in \R^2$, \eqref{eq:claimHessian} gives
\[
f(\blambda) \ge f(\blambda_0) + \frac{\alpha_0}{2} \|\blambda - \blambda_0\|^2. 
\]
Fix $x \ge 1$. For any $\blambda$ 
on the circle
$C(x) := \{\lambda \in \R^2: \|\lambda - \lambda_0\| = x n^{-1/4}\}$
centered at $\blambda_0$ and of radius $x n^{-1/4}$, we have
\begin{equation}
  \label{eq:useful2}
f(\blambda) \ge f(\blambda_0) + 3y, \qquad y = \frac{\alpha_0 x^2}{6\sqrt{n}}. 
\end{equation}
Fix $z = f(\blambda_0) + y$ and consider the event $E$ defined as
\begin{equation}
  \label{eq:E}
  F(\blambda_0) < z \quad \text{and} \quad \inf_{\lambda \in C(x)} \, F(\blambda) > z. 
\end{equation}
By convexity of $F$, when $E$ occurs, $\blambda_\star$ must lie inside
the circle and, therefore,
$\|\blambda_\star - \blambda_0\| \le x n^{-1/4}$.

It remains to show that $E$ occurs with high probability. Fix $d$
equispaced points $\{\blambda_i\}_{i = 1}^d$ on $C(x)$. Next, take any
point $\blambda$ on the circle and let $\blambda_i$ be its closest
point. By convexity,
\begin{equation}
  \label{eq:convexity2}
  F(\blambda) \ge F(\blambda_i) + \langle \nabla F(\blambda_i), \blambda
  - \blambda_i\rangle \ge F(\blambda_i) - \|\nabla F(\blambda_i)\|
  \|\blambda - \blambda_i\|. 
\end{equation}
On the one hand, $\|\blambda - \blambda_i\| \le \pi x n^{-1/4}/d$.  On
the other, by \eqref{eq:subexp2} we know that if we define $B$ as
\[
  B := \left\{\max_i \|\nabla F(\blambda_i) - \nabla
  f(\blambda_i)\|_2 \ge x n^{-1/2}\right\}
\]
then
\begin{equation}
  \label{eq:Bc}
  \P\{B^c\} \le 
  2d \,\exp \left( -c_2
    \operatorname{min}\left(\frac{x^2}{c_3^2(1+\max_i \|\blambda_i\|^2)^2},
      \frac{\sqrt{n} x}{c_3(1+\max_i \|\blambda_i\|^2)}\right)\right).
\end{equation}
Also, since $\|\nabla^2 f\|$ is bounded \eqref{eq:claimHessian} and
$\nabla f(\blambda_0) = 0$,
\[
  \|\nabla f(\blambda_i)\|_2 \le \alpha_1 \|\blambda_i - \blambda_0\| =
  \alpha_1 \, x n^{-1/4}. 
\]
For $n$ sufficiently large, this gives that on $B$, 
\[
  \|\nabla F(\blambda_i)\| \|\blambda - \blambda_i\| \le C \, {y}/{d}
\]
for some numerical constant $C$. Choose $d \ge C$. Then it follows
from \eqref{eq:convexity2} that on $B$,
\[ \inf_{\blambda \in C(x)} \, F(\blambda) \ge \min_i F(\blambda_i) - y. 
\]
It remains to control the right-hand side above. To this end, observe
that
\[
F(\blambda_i) > f(\blambda_i) - y \quad \implies \quad F(\blambda_i) - y > f(\blambda_0) + y = z
\]
since $f(\blambda_i) \ge f(\blambda_0) + 3y$ by \eqref{eq:useful2}.
Hence, the complement of the event $E$ in \eqref{eq:E} has probability
at most
\[
  \P\{E^c\} \le \P\{B^c\} + \P\{F(\blambda_0) \ge f(\blambda_0) + y\} + \sum_{i =
    1}^d \P\{F(\blambda_i) \le f(\blambda_i) - y\}. 
\]
The application of \eqref{eq:Bc} and that of \eqref{eq:subexp1} to the last 
two terms in the right-hand side concludes the proof. 
\end{proof}

\section{Conclusion}
\label{sec:conclusion}

In this paper, we established the existence of a phase transition for
the existence of the MLE in a high-dimensional logistic model with
Gaussian covariates. We derived a simple expression for the
phase-transition boundary when the model is fitted with or without
an intercept. Our methods use elements of convex geometry, especially the
kinematic formula reviewed in Section \ref{sec:proofthm}, which is a
modern version of Gordon's escape through a mesh theorem
\cite{Gordon88}. It is likely that the phenomena and formulas
derived in this paper hold for more general covariate distributions,
and we leave this to future research.

\small
\subsection*{Acknowledgements}

P.~S.~was partially supported by the Ric Weiland Graduate Fellowship
in the School of Humanities and Sciences, Stanford
University. E.~C.~was partially supported by the Office of Naval
Research under grant N00014-16-1-2712, by the National Science
Foundation via DMS 1712800, by the Math + X Award from the Simons
Foundation and by a generous gift from TwoSigma. E.~C.~would like to
thank Stephen Bates and Nikolaos Ignatiadis for useful comments about
an early version of the paper.

\bibliographystyle{plain}
\bibliography{bibfileLR}

\end{document}